\newcommand{\@chapapp}{\relax}%
\def\Reals{\mathbb{R}}
\newcommand{\Real}{{\mathbb R}}
\begin{document}
  \title{Rate Lifting for Stochastic Process Algebra\\
	  -- Exploiting Structural Properties --}
\titlerunning{Rate Lifting for SPA}
%
%\titlerunning{Abbreviated paper title}
% If the paper title is too long for the running head, you can set
% an abbreviated paper title here
%
\author{Markus Siegle \and Amin Soltanieh}
%version of \today
\authorrunning{ M. Siegle, A. Soltanieh}
% First names are abbreviated in the running head.
% If there are more than two authors, 'et al.' is used.
%
\institute{Department of Computer Science, Bundeswehr University Munich
\email{markus.siegle@unibw.de, amin.soltanieh@unibw.de}\\
}
\maketitle              % typeset the header of the contribution

\begin{abstract}

This report presents an algorithm for determining the unknown rates in the sequential processes of a Stochastic Process Algebra (SPA) model,
provided that the rates in the combined flat model are given.
Such a rate lifting is useful for model reengineering and model repair.
Technically, the algorithm works by solving systems of nonlinear equations
and -- if necessary -- adjusting the model's synchronisation structure without changing its transition system.
This report contains the complete pseudo-code of the algorithm.
The approach taken by the algorithm exploits some structural properties of SPA systems, which are formulated here for the first time
and could be very
beneficial also in other contexts.

\keywords{Stochastic Process Algebra \and Structural Properties \and Markov Chain \and Model Repair \and Rate Lifting}
\end{abstract}

\section{Introduction}
\label{sec:Intro}

Stochastic Process Algebra (SPA) is a family of formalisms widely used
in the area of quantitative modelling and evaluation.
Typical members of this family are
PEPA \cite{HillstonBook},
TIPP \cite{Goetz94},
EMPA \cite{Bernardo99},
CASPA \cite{kuntz:04b},
but also the reactive modules language of tools such as PRISM \cite{KNP11} and STORM \cite{storm:2017}.
Originally devised for classical performance and dependability modelling,
SPA models are now frequently used in probabilistic
model checking projects.

This paper presents a solution to the following problem:
Given a compositional SPA specification
where the transition rates of its components are unknown,
but given all transition rates of the associated low-level,
flat transition system,
find the unknown transition rates for the components
of the high-level SPA model.
An alternative formulation of the same problem is for a compositional
SPA specification with known original transition rates in its components,
but given rate modification factors for (a subset of) the transition rates
in its flat low-level model.
Here the task is to find new transition rates for the components of the
high-level SPA model,
such that the resulting rates in the flat model will be modified as desired.
The first formulation is from the perspective of systems reengineering
(to be more specific, one could call it rate reengineering),
whereas the second one pertains to model checking and model repair \cite{Bartocci:2011,Chen:2013,Pathak:2015}.
We will refer to both variants of the problem as ``rate lifting problem''.

An algorithm that solves the rate lifting problem for SPA models
with $n=2$ components was presented in \cite{Soltanieh:2020},
the equation system involved being studied in \cite{Soltanieh:2021}.
However,
developing a rate lifting algorithm for a general number $n \geq 3$ of processes
turns out to be a much bigger challenge, since -- firstly -- SPA models with
$n$ components may have a much more complex synchronisation structure
than for $n=2$,
and it is the synchronisation structure which plays an essential
role during the execution of the algorithm.
Secondly, components of SPA models may contain selfloops
(meant to synchronise with other components),
and -- related to this -- 
the transition system underlying
a compositional SPA model is actually
a flattened multi-transition system~\cite{HillstonBook,Goetz94}.
These two facts have to be considered during the
necessary deconstruction of a flat transition, and they strongly contribute
to the complication of the problem.
So, in this paper we develop a rate lifting algorithm for an SPA
system consisting of $n$ components, where $n$ is arbitrary.
The algorithm will assign (new) values to the components'
transition rates and -- under certain circumstances --
it will change the synchronisation structure of the SPA model.
The latter means that the algorithm may add actions to certain synchronisation sets
and in consequence it will insert additional selfloops at some
specific component states, but it will do this in such a way that
the set of reachable states and the set of transitions of the overall
model are not changed.
Only the transition rates of the overall model are set/changed as desired.
Technically, the algorithm works by setting up and solving systems of nonlinear (actually multilinear) equations.

It is quite easy to see that an arbitrary assignment of rates to the transitions
of the low-level transition system may not always be realisable by
suitable rates in the components, i.e.\ not every instance of the rate
lifting problem has a solution.
Therefore, naturally, the algorithm presented in this paper will
not always succeed.
However, it is guaranteed that the algorithm will find a solution,
if such a solution exists
(see Sec.~\ref{AppendixCorrectOptimal}).

We build our algorithm based on certain structural properties of SPA systems,
which can be exploited in the course of the algorithm.
As an example, for a given transition in one of the SPA components,
it is necessary to identify the partners
which may or must synchronise with it.
To the best of our knowledge,
these fundamental properties have not previously been addressed
in the literature,
which is suprising, since they could be very valuable also in other contexts.
For example, in compositional system verification,
distinguishing between different types of neighbourhoods of processes or
determining the participating set of a transition
(see Sec.~\ref{section:StructProps})
is the key to establishing 
dependence / independence relations between processes.

\section{Structural Properties of SPA}
\label{section:StructProps}
We consider a simple but fairly general class of Markovian Stochastic Process
Algebra models constructed by the following grammar:

\begin{definition}
{\em (SPA language)}
For a finite set of actions $Act$,
%(including the internal action $\tau$),
let $a \in Act$
and $A \subseteq Act$.
Let $\lambda \in \Real^{>0}$ be a transition rate.
%and  let $X \in Var$ be a process variable.
An SPA system $Sys$ is a process of type $Comp$, constructed according to the following grammar:
\begin{eqnarray*}
Comp \; & := & \; (Comp \; ||_A \; Comp) \; \big| \; Seq\\
Seq \; & := & \; 0 \; \big| \; \; (a, \lambda);Seq  \; \big| \; Seq + Seq
\; \big| \; V
\end{eqnarray*}

\end{definition}
$Seq$ stands for sequential processes, and $Comp$ for composed processes.
$V$ stands for a process variable for a sequential process,
which can be used to define cyclic behaviour (including selfloops).
One could add a recursion operator, the special invisible action $\tau$,
hiding and other features, but this is not essential for our purpose.
The semantics is standard, i.e.\ the SPA specification is mapped to the underlying flat transition system (an action-labelled CTMC), see e.g.\ \cite{HillstonBook,Goetz94}.
It should be mentioned that we assume multiway synchronisation\footnote{Unlike, e.g., the process algebra CCS which has two-way synchronisation \cite{Milner:1980}},
i.e.\ the synchronisation of two
$a$-transitions yields another $a$-transition
(whose rate is a function of the two partner transitions, or
-- even more general -- of the two partner processes),
which can then participate in further $a$-synchronisations, etc..

An SPA system corresponds to a process tree whose internal nodes are 
labelled by the parallel composition operator,
each one parametrized by a set of
synchronising actions ($||_A$, with $A \subseteq Act$),
and whose leaves are sequential processes of type $Seq$.
For a specific action $a \in Act$, we write $||_a$ as an abbreviation
to express that $a$ belongs to the set of synchronising actions,
and $||_{\neg a}$ that it does not.
%Let $||_a$ denote an internal node of the process tree of $Sys$ which requires
%synchronisation on $a$.

\begin{definition}
Let $Sys$ be a given SPA system.
\begin{itemize}
\item[(a)]
The set of all sequential processes within $Sys$ is denoted as
$seqproc(Sys)$ 
(i.e.\ the set of all leaves of the process tree of $Sys$).
\item[(b)]
The set of all (sequential or composed) processes within $Sys$ is denoted as
$proc(Sys)$. 
\end{itemize}
\end{definition}
The set $proc(Sys)$ equals the set of all nodes of the
process tree of $Sys$.
Obviously, $seqproc(Sys) \subseteq proc(Sys)$.

Let us denote all actions occurring in the syntactical specification of a
sequential process $P \in seqproc(Sys)$ as $Act(P)$.
We can extend this definition to an arbitrary process $X \in proc(Sys)$
by writing $Act(X) = \bigcup Act(P_i)$,
where the union is over those sequential
processes $P_i$ that are in the subtree of $X$.
For a sequential process $P$, the fact that $a \in Act(P)$ means that
$P$ (considered in isolation) can actually at some point in its dynamic
behaviour perform action $a$.
However, for a process $X \in proc(Sys) \setminus seqproc(Sys)$,
the fact that $a \in Act(X)$ does not necessarily mean that $X$ can actually
perform action $a$.
As an example, think of $X= P \; ||_{a} \; Q$, where $a \in Act(P)$ but
$a \not\in Act(Q)$.
As another example, think of the same $X$ where
$a \in Act(P)$ and
$a \in Act(Q)$ but no combined state is reachable
in which both $P$ and $Q$ can perform action $a$.
Therefore, we define
$Act_{perf}(X) \subseteq Act(X)$ to be those actions that $X$
(considered in isolation) can actually perform.
While $Act(X)$ is a purely syntactical concept, $Act_{perf}(X)$
is a behavioural concept.
%Note that for $P \in seqproc(Sys)$ we have
%$Act_{perf}(P) = Act(P)$, but for a general process $X$ we have
%$Act_{perf}(X) \subseteq Act(X)$.

Given two proceses $X,Y \in proc(Sys)$, we say that $X$ and $Y$ are disjoint if and only if
they do not share any part of the process tree of $Sys$.
Inside the disjoint processes $X$ and/or $Y$, different actions
(from $Act(X)$ and $Act(Y)$) may take place,
among them the specific action $a$, say.
Synchronisation on action $a$ between $X$ and $Y$ is possible if and only if the root of the smallest subtree containing both $X$ and $Y$ is of type $||_a$.
Maximal $||_a$-rooted subtrees are called $a$-scopes, as
formalized in the following definition.
\begin{definition}
\label{def:ascope}
Let $a \in Act$.
An $a$-scope within an SPA system $Sys$ is a subtree
rooted at a node of type $||_a$, provided that on all nodes on the path
from that node to the root of $Sys$ there is no further synchronisation on action $a$
(i.e.\ all nodes on that path, including the root, are of type $||_{\neg a}$).\\
Furthermore, as a special case, if $P \in seqproc(Sys)$ and there is no $a$-synchronisation
on the path from $P$ to the root of the process tree of $Sys$, we say
that $P$ by itself is an $a$-scope.
\end{definition}
For example, in the system shown in Figure~\ref{Scope}, subtrees rooted at $X_1$ and $X_4$ are $a$-scopes, and sequential process $P_3$ is also an $a$-scope.
The only $b$-scope of this system is at the root of the system,
i.e.\ $X_2$.
\begin{figure}[t]
	\centering
	\includegraphics[width=0.8\textwidth]{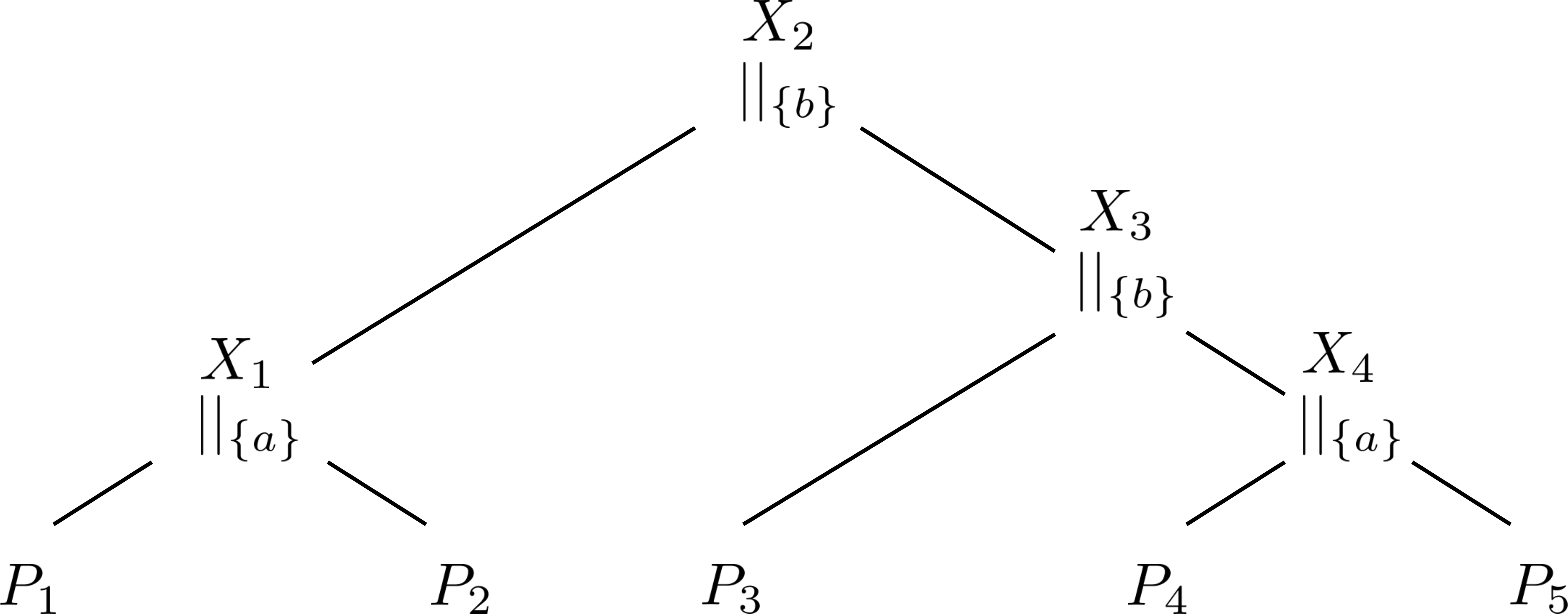}
	\caption{$(P_1 \; ||_{\{a\}} \; P_2) \; ||_{\{b\}} \; (P_3 \; ||_{\{b\}} \; (P_4 \; ||_{\{a\}} \; P_5))$ }
\label{Scope}
\end{figure}

Note that, according to this definition, $a$-scopes are always maximal,
i.e.\ an $a$-scope can never be a proper subset of another $a$-scope.
Clearly, if the root node of $Sys$ requires synchronisation on action $a$, then 
the whole $Sys$ is a single $a$-scope.
Synchronisation via action $a$ is impossible between two distinct $a$-scopes.
But even within a single $a$-scope, not all processes can / need to synchronise
on action $a$.
The following definition answers the question
(from the perspective of a sequential process $P$)
which processes cannot / may / must synchronise with an $a$-transition
in process $P$.
\begin{definition}
\label{def:neighbourhoods}
For $a \in Act$, consider the $a$-transitions within process $P \in seqproc(Sys)$.
Let $X \in proc(Sys)$ be such that $P$ and $X$ are disjoint (i.e.\ that $P$ is not part of $X$).
Let $r$ be the root of the smallest subtree that contains both $P$ and $X$.
\begin{itemize}
\item[(a)]
$X \in N_{cannot}(Sys,P,a)$
iff $r$ is of type $||_{\neg a}$.
%at $r$ there is no synchronisation on $a$.
\item[(b)]
$X \in N_{may}(Sys,P,a)$ iff
$r$ is of type $||_{a}$ but
%at $r$ there is synchronisation on $a$ but
on the path\footnote{``Path'' here means all nodes strictly between $r$ and
the root of $X$.}
from $r$ to $X$ there exists a node of type $||_{\neg a}$.
%without synchronisation on $a$.
\item[(c)]
$X \in N_{must}(Sys,P,a)$ iff
$r$ is of type $||_{a}$ and
%at $r$ there is synchronisation on $a$ and
on the path from $r$ to $X$ all nodes are of type $||_{a}$.
%require synchronisation on $a$.
\end{itemize}
\end{definition}

\begin{remark}
\label{remark:maynot}
Note that for a process $X$ it is possible that
$X \in N_{may}(Sys,P,a)$
or $X \in N_{must}(Sys,P,a)$
even if $a \not\in Act_{perf}(X)$ (or even $a \not\in Act(X)$),
which of course means that $X$ will
never be able to synchronise on action $a$.
This is considered below in the definition of the set $IS_r$ 
(cf.\ Def.~\ref{def:involvedset}).
Related to this observation,
note further that a process $X \in N_{may}(Sys,P,a)$ could actually
be forced to synchronise with $P$ on action $a$
(i.e.\, it could be that $X$ must synchronise with $P$ on $a$,
even though $X \notin N_{must}(Sys,P,a)$).
For example,
if $Sys = P \; ||_a \; (Q \; ||_{\neg a} \; R)$
then $Q \in N_{may}(Sys,P,a)$
and $R \in N_{may}(Sys,P,a)$,
but if $a \notin Act(Q)$
then $P$ always needs $R$ as a synchronisation partner on action $a$.
\end{remark}

\begin{lemma}
\begin{itemize}
\item[(a)]
The neighbourhood $N_{cannot}(Sys,P,a)$ is disjoint from
$N_{may}(Sys,P,a)$ and $N_{must}(Sys,P,a)$.
\item[(b)]
Every $X \in N_{may}(Sys,P,a)$ is a subtree of some $Y \in N_{must}(Sys,P,a)$.
\end{itemize}
\end{lemma}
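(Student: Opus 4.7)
I would argue this directly from the definitions. Given that $P$ and $X$ are disjoint, the smallest subtree containing both is uniquely determined and so is its root $r$; moreover $r$ is a node of type $||_A$ for exactly one $A$, so either $a \in A$ (i.e.\ $r$ is of type $||_a$) or $a \notin A$ (i.e.\ $r$ is of type $||_{\neg a}$). The definition of $N_{cannot}$ requires the former alternative to fail, while both $N_{may}$ and $N_{must}$ require it to hold. Hence $N_{cannot}(Sys,P,a)$ cannot overlap either of the other two sets. This is a one-line case analysis on the label of $r$.

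\textbf{Plan for part (b).} Let $X \in N_{may}(Sys,P,a)$ and let $r$ be the root of the smallest subtree containing both $P$ and $X$; by hypothesis $r$ is of type $||_a$, and there is some $||_{\neg a}$-node strictly between $r$ and the root of $X$, so $X$ is not a direct child of $r$. The idea is to ``climb up'' from $X$ all the way to the child of $r$. Concretely, I would let $s$ be the unique child of $r$ whose subtree contains $X$, and take $Y$ to be that subtree. Then $X$ is a proper subtree of $Y$ by construction, and it remains to verify that $Y \in N_{must}(Sys,P,a)$.

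For that verification the steps are: (i) $P$ and $Y$ are disjoint, because $P$ lies in the \emph{other} child of $r$; (ii) the smallest subtree containing both $P$ and $Y$ is again rooted at $r$, since any subtree containing $P$ and $Y$ must contain $r$, and $r$ itself clearly contains both; (iii) $r$ is of type $||_a$ by choice of $r$; and (iv) the path from $r$ to the root $s$ of $Y$, in the sense of the footnote (nodes strictly between the two), is empty, so the condition ``all nodes on that path are of type $||_a$'' holds vacuously. Combining (i)--(iv) with Def.~\ref{def:neighbourhoods}(c) gives $Y \in N_{must}(Sys,P,a)$.

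\textbf{Where the work is.} Neither part requires real computation; the only point demanding a moment of care is (ii) above, namely that pushing $X$ up to the direct child $s$ of $r$ does not change the root of the ``smallest joint subtree'' — it remains $r$. Once this is observed, the vacuous truth in (iv) delivers the $N_{must}$ membership for free. I do not anticipate any serious obstacle beyond being careful with the tree terminology and the footnote's convention on what counts as ``the path''.
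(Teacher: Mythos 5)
Your argument is correct and follows essentially the same route as the paper: part (a) is the same direct case analysis on the type of $r$, and for part (b) your choice of $Y$ (the child of $r$ whose subtree contains $X$) is exactly the node the paper's proof picks. You merely spell out the verification that $Y \in N_{must}(Sys,P,a)$ (disjointness, unchanged root $r$, vacuously satisfied path condition), which the paper leaves implicit.
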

\begin{proof}
Part (a) follows directly from the definition.
Part(b): For given $X$, one such node $Y$ is the node directly below $r$
on the path from $r$ (as defined in Def.~\ref{def:neighbourhoods}) to $X$.
\end{proof}

\subsection{Moving Set and Participating Set}
\label{subsec:movandpar}
Given a system $Sys$ constructed from $n$ sequential processes $P_1, \dots, P_n$,
its global state is a vector $(s_1, \dots, s_n)$
where $s_i$ is the state of $P_i$.
We follow the convention that the ordering of processes is given by the
in-order (LNR) traversal of the process tree of $Sys$.
A transition $t$ in the flat transition system of $Sys$ is given by
\[
	t = ((s_1, \dots, s_n)  \xrightarrow{a,\lambda_s} (s_1', \dots, s_n'))
\]
where for at least one $k \in \{1, \dots, n \}$ we have $s_k \neq s_k'$
and where the transition rate $rate(t)=\lambda_s$ is a function of the rates of the
transitions of the participating processes.
For such a transition $t$ we introduce the following notation:
\begin{equation*}
\begin{split}
action(t) = a & \qquad rate(t)=\lambda_s \\
 source(t)=(s_1, \ldots, s_n) & \qquad target(t)=(s_1', \ldots, s_n') \\
 source_i(t)=s_i & \qquad target_i(t)=s_i'
\end{split}
\end{equation*}
But which are actually the participating processes in the above transition $t$?
For an $a$-transition $t$ as above, we define the moving set $MS(t)$
as the set of those sequential processes whose state changes, i.e.\
$MS(t)= \{P_k \; | \; s_k \neq s_k' \}$.
The complement of the moving set is called the stable set $SS(t)$, i.e.\
$SS(t) = \{P_1, \dots, P_n  \} \setminus MS(t)$.

Since processes may contain selfloops and since synchronisation on
selfloops is possible
(and often used as a valuable feature to control the context of a transition),
the participating set $PS(t)$ of transition $t$
can also include processes which participate in $t$ in an invisible way
by performing a selfloop.
Therefore $PS(t)$ can be larger than $MS(t)$,
i.e.\ in general we have $MS(t) \subseteq PS(t)$.
Processes in $SS(t)$
which {\em must} synchronise on $a$ with one of the elements of $MS(t)$
must have an $a$-selfloop at their current state and must belong to $PS(t)$.
Furthermore, processes in $SS(t)$ which {\em may} synchronise on $a$ with one of the elements of $MS(t)$
and have an $a$-selfloop at their current state may also belong to $PS(t)$,
provided that they are not in the $N_{cannot}$-neighbourhood
of one of the processes of $MS(t)$.
Altogether we get:
\resizebox{\textwidth}{!}{
\begin{minipage}{\linewidth}
\begin{flalign*}
PS(t) &   =             MS(t)   \; \cup&\\
	  & \phantom{=}     \Big\{P_i \in SS(t) \; |&\\
      & \phantom{=}     \Big(\exists P_j \in MS(t):&\\
	  & \phantom{=}     \big(P_i \in N_{must}(Sys,P_j,a)&\\
      & \phantom{=}     \vee \big(P_i \in N_{may}(Sys,P_j, a) \wedge (\mbox{selfloop} \; 
         s_i \xrightarrow{a, \lambda_i} s_i \; \mbox{exists and is enabled in } source(t)) \big)\big) \Big)&\\
      & \wedge          \big( \not\exists P_j \in MS(t): P_i \in N_{cannot}(Sys,P_j,a) \big) \Big\}
\end{flalign*}
\end{minipage}}
The condition ``selfloop \dots \, is enabled in $source(t)$''
means that the selfloop in $P_i$ can actually take place in the source
state of transition $t$, i.e.\ it
is not blocked by any lacking synchronisation partner(s).
Note that for the case $P_i \in N_{must}(Sys,P_j,a)$
there obviously exists a selfloop in process $P_i$, but this existence
is implicit, so we do not have to write it down.
In Appendix~\ref{AppendixPS}, a procedure for practically calculating $PS(t)$ is given.
%
%{\bf Version 2} (correct, fully formal but complicated):
%\begin{eqnarray*}
%PS(t) & = & MS(t) \cup\\
%	&&
%	\Big\{P_i \in SS(t) \; | \\
%        &&
%        \Big(\exists P_j \in MS(t):\\
%	&&
%	\big(P_i \in N_{must}(Sys,P_j,a) \\
%&& \vee
%	\big(P_i \in N_{may}(Sys,P_j, a)
%	\wedge (\mbox{selfloop} \; s_i \xrightarrow{a, \lambda_i} s_i \;
%\mbox{exists}\\
%&&
%\quad \quad \quad \wedge \;
%(\forall X \in N_{must}^{max}(P_i):
%(X \mbox{ is stable} \Rightarrow (\overrightarrow{s_X} \xrightarrow{a,\lambda_X} \overrightarrow{s_X}) \mbox{ exists}))
%)
%	\big)\big) \Big) \\
%& \wedge & \big(
%        \not\exists P_j \in MS(t): P_i \in N_{cannot}(Sys,P_j,a)
%\big)
%\Big\}
%\end{eqnarray*}
%Here the most complicated subclause is the one covering the case
%$P_i \in N_{may}(Sys,P_j,a)$.
%We need to ensure that all necessary synchronisation partners $X$
%are ready to synchronise with the $a$-selfloop in $P_i$.
%$X$ can be any kind of process, i.e.\ a sequential process or a
%composed process,
%that's why we write $\overrightarrow{s_X}$ instead of $s_X$.
%It is enough to check maximal such synchronisation partners.
%Formally:\\
%$(X \in N_{must}^{max} (P_i)) \Leftrightarrow
%(X \in N_{must} (P_i) \; \wedge  \nexists Y \in N_{must}(P_i):
%X \mbox{ is a subprocess of } Y)
%$
%{\bf Version 3}

Using an example we show why the definition of $PS(t)$
needs to be so complicated, in particular why
being in the $may$ neighbourhood of a moving component and having a selfloop 
is not enough to become a participating component.
For the system shown in Figure~\ref{PSexample}, we wish to
find $PS(t)$ where $t=((1,1,3,1,2) \xrightarrow{a} (2,1,3,1,2))$.
$P_1$ is the only moving component,
and assume that there are $a$-selfloops in state 1 in  $P_2$ and also in state 3 in $P_3$,
but that there are no $a$-selfloops in state 1 of $P_4$ and in state 2 of $P_5$.
$P_2 \in N_{may}(Sys, P_1, a)$ is in $PS(t)$, since its selfloop can take place without hindrance,
whereas
$P_3 \in N_{may}(Sys, P_1, a)$ is not included in $PS(t)$, since its selfloop,
although it exists,
is not enabled in the source state of transition $t$ (it would need $P_4$ or $P_5$ as a synchronisation partner).

\begin{figure}[h!]
	\centering
	\includegraphics[width=1\textwidth]{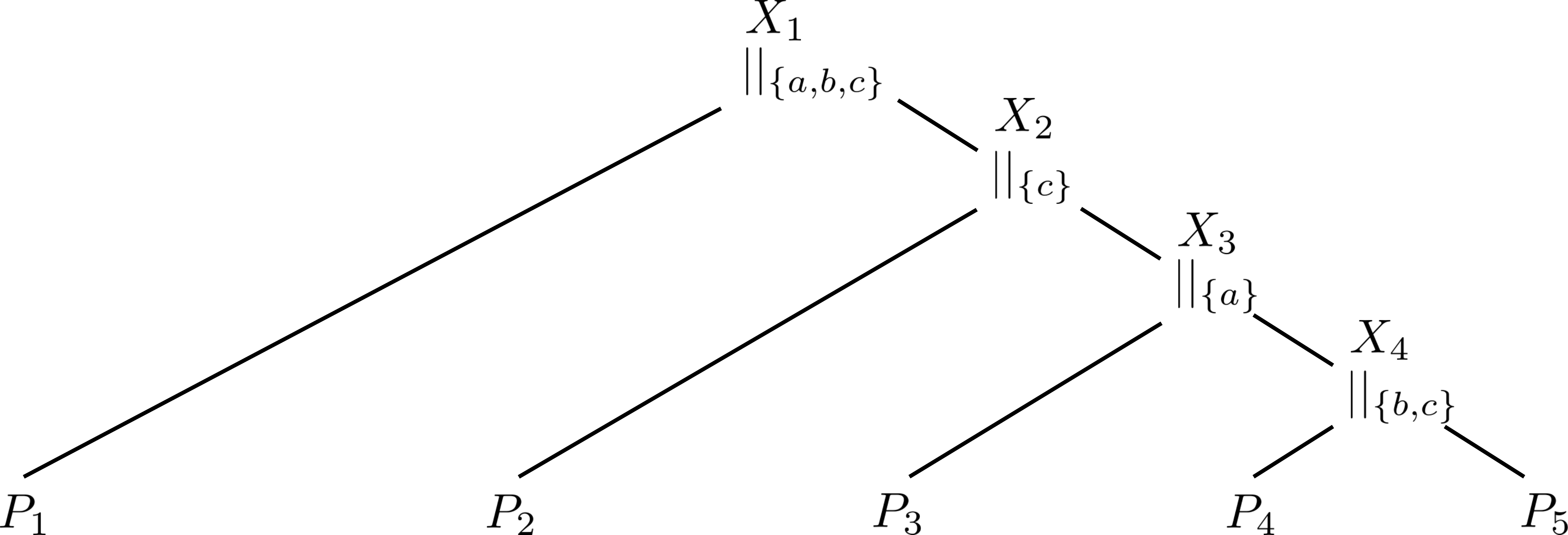}
	\caption{$P_1 \; ||_{\{a,b,c\}}\;(P_2 \; ||_{\{c\}} \; (P_3 \; ||_{\{a\}} \; (P_4 \; ||_{\{b,c\}} \; P_5)))$ }
\label{PSexample}
\end{figure}

\subsection{Involved Set}
In addition to the Participating Set $PS(t)$ of a transition $t$,
we also need to define the Involved Set $IS(t)$ which can be larger than
$PS(t)$, since it also contains those processes which may synchronise
on action $a$ with one of the processes in $PS(t)$ (in another transition $t'$), and so on, inductively.
Formally:

\begin{definition}
	\label{def:involvedset}
For a transition $t$ with $action(t)=a$ we define
\begin{itemize}
\item[(i)]
The Involved Set
\begin{eqnarray*}
IS(t) & = & PS(t) \cup
	\Big\{P_k \in seqproc(Sys) \; | \; \exists P_j \in IS(t):
	\big(P_k \in N_{may}(Sys,P_j,a)
	\big)
	\Big\}
\end{eqnarray*}
\item[(ii)]
The restricted Involved Set
$IS_r(t) = \{P \in IS(t) \; | \; action(t) \in Act(P)  \}$.
\end{itemize}
\end{definition}
So $IS(t)$ represents the convex hull of the $N_{may}$-neighbourhood of one of the participating processes.
That's why after the existential quantor in part (i) we have to write
$P_j \in IS(t)$
instead of only
$P_j \in PS(t)$.
The definition of the restricted Involved Set $IS_r(t)$ is motivated by
the observation in Remark~\ref{remark:maynot}.
The idea is to omit those sequential processes $P$ from $IS(t)$
where $a \not\in Act(P)$, since
they will never actually synchronise on action $a$ with any other process.

In some cases $IS(t) = PS(t)$,
but it can be easily shown by example that $IS(t)$ may be a strict
superset of $PS(t)$.
Consider the system
\[
Sys = ( P_1 \; ||_{ \neg a} \; P_2 ) \; ||_a \; ( P_3 \; ||_a \; P_4)
\]
and the transition
$
t = ((s_1, s_2, s_3, s_4)  \xrightarrow{a,\lambda_s}
(s_1', s_2, s_3', s_4))
$.
The moving set is $MS(t)= \{ P_1, P_3 \}$, and there is obviously a selfloop in $P_4$ of the form
$s_4 \xrightarrow{a,\lambda_4} s_4$, so
the participating set is
$PS(t) = \{ P_1, P_3, P_4  \}$.
However, $P_2$ is also (indirectly) involved since it is possible
that in  some other transition $P_3$ (and $P_4$) will synchronise
on action $a$ with $P_2$.
More concretely:
The transitions
$s_3 \xrightarrow{a, \lambda_3} s_3'$ (in $P_3)$ and
$s_4 \xrightarrow{a, \lambda_4} s_4$ (in $P_4)$
may synchronise with
$s_2' \xrightarrow{a, \lambda_2} s_2''$ (in $P_2)$
(for some states $s_2'$ and $s_2''$ of $P_2$).
Therefore we get $IS(t) = \{ P_1, P_2, P_3, P_4 \}$.
This will be important for our rate lifting algorithm
(Sec.~\ref{sec:LiftingAlgo}),
since if we didn't take the involvement of $P_2$ into account,
we might change some rates in $P_3$ and/or $P_4$ which would
have side effects on other transitions.
This means that we have to set up a system
of equations involving all four processes.

The following lemma establishes the connection between a transition's
Involved Set $IS(t)$ (a behavioural concept)
and the $a$-scope from Def.~\ref{def:ascope}, which latter is a structural concept.

\begin{lemma}
For a transition $t$ of the SPA system $Sys$, with $action(t)=a$,
let $r$ be the root of the smallest tree containing all
processes of $IS(t)$.
\begin{itemize}
\item[(i)]
Then $r$ is a node of type $||_a$.
\item[(ii)]
There is no other node of type $||_a$ ``above'' $r$
(i.e.\ on the path from $r$ to the root of $Sys$).
\item[(iii)]
The Involved Set $IS(t)$ is exactly the set of
all sequential processes in the subtree rooted at $r$.
\item[(iv)]
The Involved Set $IS(t)$ is exactly the set of sequential processes
in the $a$-scope rooted at $r$.
So, in a sense, the Involved Set and the $a$-scope are equal.
\end{itemize}
\end{lemma}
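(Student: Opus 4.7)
The plan is to pin down $IS(t)$ structurally by showing that it coincides with the set of sequential processes of a unique $a$-scope, and then to read off parts (i)--(iv). First, observe that since every $P \in MS(t)$ performs action $a$ as part of the single flat transition $t$, all elements of $MS(t)$ must lie within a common $a$-scope $S$ of $Sys$, because synchronisation on $a$ across distinct $a$-scopes is impossible. Let $r_S$ denote the root of $S$; by Def.~\ref{def:ascope}, $r_S$ is of type $||_a$, except in the degenerate case where $|MS(t)|=1$ and there is no $||_a$-node above the single moving process, in which case $r_S$ is that sequential process itself.

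The inclusion that every element of $IS(t)$ is a sequential process of $S$ is a straightforward induction on the construction of $IS(t)$: every process added to $PS(t)$ via an $N_{must}$- or $N_{may}$-relation to a moving process $P_j$ has its LCA with $P_j$ of type $||_a$, and such an LCA must sit inside $S$ because $r_S$ is the topmost $||_a$-node covering $P_j$; the same reasoning applied to the $N_{may}$-closure step keeps all subsequently added processes inside $S$. The reverse inclusion, that every sequential process of $S$ lies in $IS(t)$, is more delicate. Fix any sequential process $Q$ of $S$ and any $P \in MS(t)$, and let $m = \mathrm{LCA}(P,Q)$. If $m$ is of type $||_a$, then either the path from $m$ to $Q$ is entirely $||_a$, so $Q \in N_{must}(Sys,P,a)$ and hence $Q \in PS(t)$, or that path contains a $||_{\neg a}$-node, so $Q \in N_{may}(Sys,P,a)$ and $Q$ enters $IS(t)$ by the closure rule. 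If instead $m$ is of type $||_{\neg a}$, then $m$ must be a strict descendant of $r_S$, so $P$ and $Q$ lie in the same child subtree of $r_S$; in that case pick any sequential process $P'$ on the other side of $r_S$, apply the preceding case first to conclude $P' \in IS(t)$, and then observe that the path from $r_S$ to $Q$ still passes through the $||_{\neg a}$-node $m$, so $Q \in N_{may}(Sys,P',a)$ and therefore $Q \in IS(t)$.

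With the two inclusions established, all four parts fall out at once. The smallest subtree of $Sys$ containing $IS(t)$ is exactly $S$, whose root $r = r_S$ is of type $||_a$, giving (i); there is no $||_a$-node above $r_S$ by the very definition of an $a$-scope, giving (ii); the sequential processes of the subtree rooted at $r$ coincide with $IS(t)$, giving (iii); and that subtree is the $a$-scope $S$ itself, giving (iv). I expect the main obstacle to be the sub-case $m = ||_{\neg a}$ of the second inclusion, where no direct $N_{must}$- or $N_{may}$-link from $P$ to $Q$ exists; the detour through a sequential process $P'$ on the opposite side of $r_S$ succeeds only because $r_S$ itself is $||_a$, which is why it is essential to first identify the enclosing $a$-scope. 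A secondary technicality is to check that the $N_{cannot}$-caveat inside the definition of $PS(t)$ is vacuous for $N_{must}$-neighbours of a moving process: otherwise two distinct moving processes would end up in each other's $N_{cannot}$-neighbourhood, contradicting their joint participation in $t$.
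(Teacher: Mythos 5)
Your proof is correct, but it is organised quite differently from the paper's. The paper proves (i), (ii) and (iii) each by a separate contradiction argument about $r$ (the LCA of $IS(t)$) and obtains (iv) as a corollary; you instead first identify the $a$-scope $S$ enclosing $MS(t)$, prove the set equality $IS(t) = \mathrm{leaves}(S)$ by double inclusion, and then read off all four parts. Your route is longer but buys genuine rigour at exactly the points where the paper is terse: in the paper's proof of (iii), the step ``$P_{not} \in N_{must}(Sys,P_r,a)$ implies $P_{not} \in IS(t)$'' is not literally an instance of the closure rule in Def.~\ref{def:involvedset}, which closes $IS(t)$ only under $N_{may}$; one must argue separately that $N_{must}$-neighbours of \emph{moving} processes land in $PS(t)$ (your check that the $N_{cannot}$-caveat is vacuous for them) and that all remaining leaves of $S$ are reached via an $N_{may}$-link, which is precisely what your detour through a witness $P'$ on the opposite side of $r_S$ accomplishes for the sub-case where $\mathrm{LCA}(P,Q)$ is of type $||_{\neg a}$. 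You also flag the degenerate case $|IS(t)|=1$, where $r$ is a leaf and claim (i) does not apply, which the paper leaves implicit. The only points resting on unstated semantic facts --- that all movers of a single $a$-transition pairwise have a $||_a$-LCA, hence share an $a$-scope, and that two movers cannot be $N_{cannot}$-neighbours of each other --- are facts the paper itself uses implicitly (e.g.\ in its proof of (i)), so they are fair to assume.
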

\begin{proof}
	(i)
Assume that $r$ was of type $||_{\neg a}$.
Then no process $P_l \in IS(t)$ in the left subtree of $r$ could 
synchronise (on action $a$)
with any process $P_r \in IS(t)$ in the right subtree of $r$,
which contradicts the fact that the set $IS(t)$ contains
processes in both subtrees of $r$.\\
(ii)
Furthermore, assume that there is another node $r_2$ of type $||_a$ on the
path from $r$ to the root of $Sys$.
Then any $a$ transition in one of the processes of $IS(t)$
would have to synchronise with some process in the other subtree of $r_2$,
which means that the subtree rooted at $r$ does actually not contain
all processes of $IS(t)$, which is a contradiction.\\
(iii)
Assume that there is a sequential process $P_{not}$ in the left subtree
of the tree rooted at $r$ such that $P_{not} \not\in IS(t)$.
We know that there exists a sequential process $P_{r}$ in the right
subtree of the tree rooted at $r$ such that $P_{r} \in IS(t)$.
Then, since according to (i) $r$ is of type $||_a$, either
$P_{not} \in N_{may}(Sys,P_r,a)$ or
$P_{not} \in N_{must}(Sys,P_r,a)$.
%Then any $a$-transition
%in $P_{not}$ would have to synchronise with an $a$-transition in the right
%subtree of the tree rooted at $r$.
But from this it follows that $P_{not}$ would have to be in $IS(t)$,
which is a contradiction.
A symmetric argument holds if we assume that
there is a sequential process $P_{not}$ in the right subtree
of the tree rooted at $r$.
Furthermore, any sequential process $P_{not}$ not in the subtree rooted at $r$
cannot be in $IS(t)$ because according to (ii) there is no
$a$-synchronisation above $r$.\\
(iv)
This is an immediate consequence of (i) - (iii).
\end{proof}

\begin{lemma}
\label{lemma:setmeansequal}
For two transitions $t_1$ and $t_2$ with $action(t_1)=action(t_2)$,
if $IS(t_1) \cap IS(t_2) \neq \emptyset$
then
$IS(t_1) = IS(t_2)$.
\end{lemma}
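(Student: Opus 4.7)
The plan is to leverage the previous lemma, which identifies $IS(t)$ with the set of sequential leaves of a uniquely determined $a$-scope. Let $a$ denote the common action of $t_1$ and $t_2$. By the preceding lemma, $IS(t_1)$ equals the set of all sequential processes in the $a$-scope rooted at some node $r_1$, and similarly $IS(t_2)$ equals the set of all sequential processes in the $a$-scope rooted at some node $r_2$. So the claim reduces to showing that two $a$-scopes whose leaf sets intersect must coincide.

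Next I would argue that in the process tree of $Sys$, if two subtrees share a leaf $P$, then one of their roots is an ancestor of the other. This is because the path from the root of $Sys$ down to $P$ is unique, and it must pass through both $r_1$ and $r_2$; hence these two nodes are comparable in the ancestor relation. Without loss of generality assume $r_1$ is an ancestor of $r_2$ (possibly equal).

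The main step is then to rule out the strict-ancestor case using the maximality built into Def.~\ref{def:ascope}. If $r_1$ were a strict ancestor of $r_2$, then $r_1$ would be a node of type $||_a$ lying on the path from $r_2$ to the root of $Sys$. But by definition of an $a$-scope, every node on the path from $r_2$ to the root of $Sys$ must be of type $||_{\neg a}$, a contradiction. Therefore $r_1 = r_2$, the two $a$-scopes coincide, and consequently $IS(t_1) = IS(t_2)$.

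The only slightly delicate point is the special case in Def.~\ref{def:ascope} where an $a$-scope is a single sequential process $P$ (with no $a$-synchronisation on the path to the root). This case fits in the same argument: sharing a leaf forces $P$ itself to lie in the other $a$-scope's subtree, and the same ancestor-chain argument applies, since any $||_a$ node strictly above $P$ is forbidden by the scope condition. Hence the conclusion $IS(t_1) = IS(t_2)$ holds uniformly in all cases. I expect the ancestor/maximality argument to be the only non-routine piece; everything else is bookkeeping via the previous lemma.
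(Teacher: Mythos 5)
Your proof is correct, but it takes a more explicit route than the paper, whose entire argument is the single sentence that the claim ``follows directly from the closure property of the $IS$ definition.'' You instead invoke the preceding structural lemma to identify each $IS(t_i)$ with the leaf set of an $a$-scope rooted at some node $r_i$, observe that two subtrees sharing a leaf have comparable roots (both lie on the unique root-to-leaf path), and then use the maximality clause of Def.~\ref{def:ascope} (equivalently, part (ii) of the preceding lemma) to exclude the strict-ancestor case, forcing $r_1=r_2$. This is arguably the more honest argument: the closure in Def.~\ref{def:involvedset} is seeded at the possibly different sets $PS(t_1)$ and $PS(t_2)$ and closes only under the non-symmetric relation $N_{may}$, so it is not immediate that a single common element propagates to equality of the two closures without passing through the $a$-scope characterisation. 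Your handling of the degenerate case where an $a$-scope is a single sequential process (so that $r$ is a leaf rather than a $||_a$ node, and part (i) of the preceding lemma does not literally apply) is a detail the paper glosses over entirely, and it is needed for the ancestor argument to go through uniformly. In short, what the paper's one-liner buys is brevity; what your version buys is a verifiable reduction to the fact that $a$-scopes partition the leaves of the process tree.
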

\begin{proof}
This follows directly from the closure property of the $IS$ definition.
\end{proof}

\section{``Parallel'' Transitions and Relevant Selfloop Combinations}
\subsection{Multi-Transition System}
It is well known that the semantic model underlying an SPA specification
is actually a {\em multi}-transition system \cite{HillstonBook,Goetz94}.
This is usually flattened to an ordinary transition system by
adding up the rates of ``parallel'' transitions, i.e.\ transitions which have
the same source state,
the same target state
and the same action label.
Thus a transition within $Sys$ may be the aggregation of more
than one transition.
As an example, consider the system
$
Sys = P \; ||_a \; (Q \; ||_{\neg a} \; R)
$
and the transition
$
t = ((s_1, s_2, s_3)  \xrightarrow{a,\lambda_s} (s_1', s_2, s_3))
$.
The moving set is $MS(t)= \{ P \}$, but the participating set
must be larger.
Assume that $Q$ has a selfloop $s_2 \xrightarrow{a,\lambda_2} s_2$ and
that $R$ has a selfloop $s_3 \xrightarrow{a,\lambda_3} s_3$.
Since $Q$ and $R$ do not synchronise on $a$, only one of those two selfloops
synchronises with $s_1 \xrightarrow{a,\lambda_1} s_1'$ at a time, but
both selfloops may synchronise with the $a$-transition in $P$.
This yields the two ``parallel'' transitions
\begin{equation*}
((s_1, s_2, s_3)  \xrightarrow{a,\lambda_{12}}  (s_1', s_2, s_3)) \;\;\;\mbox{and}\;\;\;
((s_1, s_2, s_3)  \xrightarrow{a,\lambda_{13}} 
 (s_1', s_2, s_3))
\end{equation*}
(where $\lambda_{12}$ is a function of $\lambda_1$ and $\lambda_2$,
and likewise for $\lambda_{13}$)
which are aggregated to the single transition
$
((s_1, s_2, s_3)  \xrightarrow{a,\lambda_{12}+\lambda_{13}}  (s_1', s_2, s_3))
$,
so $\lambda_s = \lambda_{12} + \lambda_{13}$.
%{\color{red}
As an anticipation of Eq.~\ref{eqn:bigeqn} in Sec.~\ref{sec:LiftingAlgo}, let us mention that in this situation our rate lifting algorithm
would create the equation
$
x^{(P)}_{s_1 s_1'} x^{(Q)}_{s_2 s_2} + x^{(P)}_{s_1 s_1'} x^{(R)}_{s_3 s_3} = \lambda_s \cdot f
$.
%}

\subsection{Calculating Relevant Selfloop Combinations}
\label{subsec:rslc}

In the simple (and most common) case that none of the
sequential processes in the SPA specification of $Sys$
has any selfloops (and also no ``parallel'' transitions), we know that
any transition of the flat transition system has only one
single semantic derivation.
In consequence, for the considered flat transition $t$ it then holds
that $PS(t) = MS(t)$.
However, as discussed above, in the general case
the flat transition system underlying a
compositional SPA specification is actually a multi-transition system 
which gets flattened to an ordinary transition system by
amalgamating ``parallel'' transitions.
%During the amalgamation, the rates of the parallel transitions are
%simply summed up.
%
In order to cover this general case, in the lifting algorithm
(see Sec.~\ref{sec:LiftingAlgo})
we have to do the opposite:
Instead of amalgamation,
we need to deconstruct a flat transition into its constituents.
I.e., given a flat transition
(which is possibly amalgamated from parallel transitions),
we need to find out the
contributing transitions, in order to be able to construct the
correct equation in line~\ref{line:createeq3} of the algorithm
(Eq.~\ref{eqn:bigeqn} in Sec.~\ref{sec:LiftingAlgo}).

Consider the flat transition
$t := ((s_1, \dots, s_n) \xlongrightarrow{c, \gamma \cdot f} (s_1', \dots, s_n'))$.
We can determine its (non-empty) moving set $MS(t)$
and its participating set $PS(t)$,
where we know that $MS(t) \subseteq PS(t)$.
We are particularly interested in the processes from the set
$(PS(t) \cap SS(t)) \setminus \bigcup_{P \in MS(t)} N_{must}(Sys,P,c)$,
%$(PS(t) \setminus MS(t)) \setminus \bigcup_{P \in MS(t)} N_{must}(Sys,P,c)$,
since these are exactly the processes that may (but not must)
contribute to transition $t$.
Certain combinations of these processes
(which have selfloops, otherwise they wouldn't be in $PS(t)$)
contribute to transition $t$.
We call these combinations ``relevant selfloop combinations (rslc)''.
Note that there are also selfloops in
$\bigcup_{P \in MS(t)} N_{must}(Sys,P,c)$,
but they are not part of rslc.

It remains to calculate rslc for transition $t$.
For this purpuse, we define a function $rslc(t)$ which returns
a set of sets of sequential processes, i.e.\ each such set
describes a relevant selfloop combination.
In the process tree of $Sys$,
let $r$ be the root node of the smallest subtree containing $PS(t)$.
We know that $r$ is either an inner node of type $||_c$
or a leaf (if $r$ were an inner node of type $||_{\neg c}$,
the participating set $PS(t)$ couldn't span both subtrees of $r$).
Calling the recursive algorithm in Appendix~\ref{AppendixRSLC}
by the top-level call RSLC($t,r$)
delivers all the relevant selfloop combinations\footnote{
Note that $rslc(t)$ as called in the lifting algorithm has one argument
(a transition)
but the recursive function $RSLC(t,n)$ (see Appendix~\ref{AppendixRSLC}) has two arguments (a transition and
a node of the process tree).}.

\begin{figure}[t!]
	\centering
	\includegraphics[width=0.8\textwidth]{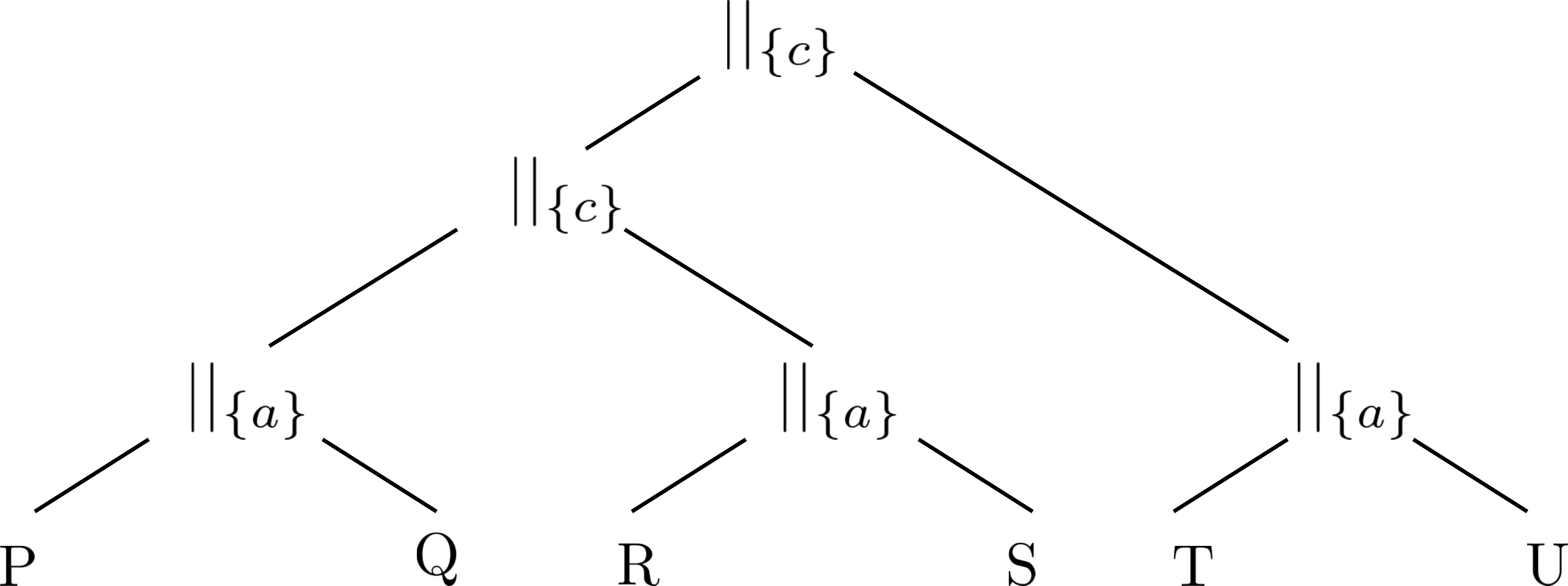}
	\caption{$
(
(P \; ||_{\neg c} \; Q)
\; ||_{c} \;
(R  \; ||_{\neg c} \; S)
)
\; ||_{c} \;
(T \; ||_{\neg c} \; U)
$ }
\label{rslc}
\end{figure}
Example: Consider the SPA specification
$
Sys = 
(
(P \; ||_{\neg c} \; Q)
\; ||_{c} \;
(R  \; ||_{\neg c} \; S)
)
\; ||_{c} \;
(T \; ||_{\neg c} \; U)
$
whose process tree is shown in Figure~\ref{rslc}, and the transition
$
t := ((s_P, s_Q, s_R, s_S, s_T, s_U) \xlongrightarrow{c, \gamma \cdot f}
(s_P', s_Q, s_R, s_S, s_T, s_U)
)
$.
Obviously, the moving set is $MS(t) = \{P \}$,
and if we assume that there are $c$-selfloops in states
$s_R, s_S, s_T$ and $s_U$ (in all of them!),
the participating set is $PS(t) = \{P, R, S, T, U \}$.
So transition $t$ can be realised as any combination of a selfloop
in $R$ or $S$ with a selfloop in $T$ or $U$,
thus the algorithm will find the set of relevant selfloop combinations
$\{ \{R,T \}, \{R, U \}, \{S,T \}, \{S,U \} \}$.\\
%{\color{red}
Anticipating once again Eq.~\ref{eqn:bigeqn} from Sec.~\ref{sec:LiftingAlgo},
this set of relevant selfloop combinations would lead to
the desired equation
\[
x^{(P)}_{s_P s_P'}  x^{(R)}_{s_R s_R}  x^{(T)}_{s_T s_T}
+
x^{(P)}_{s_P s_P'}  x^{(R)}_{s_R s_R}  x^{(U)}_{s_U s_U}
+
x^{(P)}_{s_P s_P'}  x^{(S)}_{s_S s_S}  x^{(T)}_{s_T s_T}
+
x^{(P)}_{s_P s_P'}  x^{(S)}_{s_S s_S}  x^{(U)}_{s_U s_U}
= \gamma \cdot f
\]
Alternatively, if we assumed that the participating set was smaller,
say $PS(t) = \{P, R, S, T \}$ (i.e.\ if there were no $c$-selfloop at $s_U$),
then the algorithm would find a smaller set of relevant selfloop combinations,
namely
$\{ \{R,T \}, \{S,T \} \}$,
leading to the simpler equation
$
x^{(P)}_{s_P s_P'}  x^{(R)}_{s_R s_R}  x^{(T)}_{s_T s_T}
+
x^{(P)}_{s_P s_P'}  x^{(S)}_{s_S s_S}  x^{(T)}_{s_T s_T}
= \gamma \cdot f
$
%}

\section{Lifting Algorithm}
\label{sec:LiftingAlgo}

Our new lifting algorithm processes the transitions whose rates are to be modified in a one by one fashion.
It is, however, not strictly one by one, since in many situations a whole set of ``related'' transitions is taken into account together with the currently processed transition.
The algorithm consists of four parts named A, B, C and D
%It starts either with part A or part B.
%From part A it may need to move on to part D, whereas from part B it may need to move on to part C, and from
%there also to part D
as shown in Figure~\ref{fig:algooverview}.
\begin{figure}[t]
\centering
	\includegraphics[width=0.85\textwidth]{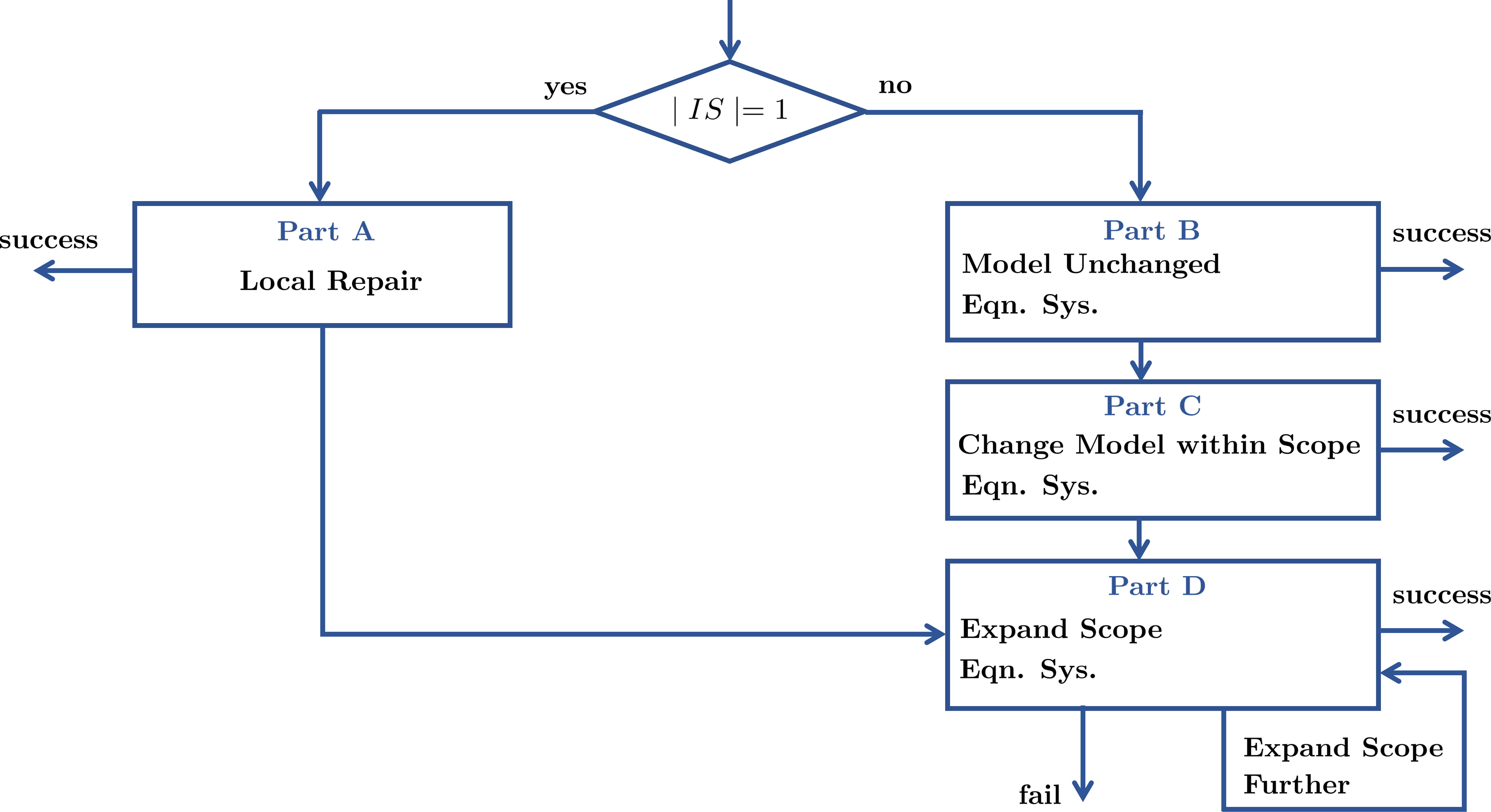}
	\caption{Overview of the algorithm}
  \label{fig:algooverview}
\end{figure}
In part A, for a transition whose involved set consists of only one single sequential process, the algorithm first tries to change its rate by local repair, which means changing the rate locally in exactly this sequential process.
Local repair will fail, however, if two flat transitions which both originate from the same local transition have different modification factors.
In part B, which is the starting point for transitions whose involved set contains at least two processes, the algorithm creates a system of nonlinear equations and tries to solve it. This system of equations covers all transitions with the same action label and the same involved set, i.e.\ all these transitions are dealt with simultaneously in one system of equations.
The basic idea behind the system of equations is to consider all involved local rates as variables whose values are to be determined.
Part C, entered upon failure of Part B, is the first part where the system specification is modified by augmenting some synchronisation sets and inserting selfloops, all within the current $c$-scope.
These modifications are done in such a way that the global transition system is not changed.
Again, like in part B, the algorithm creates a set of nonlinear equations (but now the system of equations is larger since the model has been modified) and tries to solve it.
If the previous steps have failed, Part D tries to expand the scope, by modifiying the system in a larger scope than the current involved set.
This means that the involved set is artificially augmented by adding action $c$ to the synchronisation set at a higher node.
Again, a similar but even larger system of equations of the same type is constructed.
However, even this system of equations may not have a solution, in which case the desired rate lifting has turned out to be impossible.

\subsection{Spurious Transitions}
As we have seen, in certain situations the rate lifting algorithm
needs to change the synchronisation structure of the given system, i.e.\
it will change an inner node of type $||_{\neg c}$ to a node of type  $||_{c}$.
Clearly, this needs to be done with great care, since such a step will -- in general --
change the behaviour of the system.
Therefore the algorithm, before adding action $c$ to a synchronisation
set, has to ensure that no spurious transitions will be generated.
Spurious transitions (sp.\ tr.) are extra, superfluous transitions not present
in the original system, and therefore incorrect.
Furthermore, after action $c$ has been added to a synchronisation set,
the algorithm also has to ensure that all transitions in the original
system are still possible (it could easily be that a previously
existing transition now lacks a synchronisation partner in the newly
synchronised system).
For that purpose, the algorithm inserts selfloops into the sequential
components, wherever necessary.
There are actually two types of spurious transitions:
\begin{itemize}
\item[(A)]
Superfluous transitions which appear when two previously
$c$-non-synchronised components become synchronised over action $c$.
\item[(B)]
Superfluous transitions which appear when a new $c$-selfloop is inserted
into a sequential process which is in the $c$-must- or $c$-may-neighbourhood
of another process.
\end{itemize}
Overall, the algorithm guarantees that even though the synchronisation
structure of the system may be altered and artificial selfloops are inserted,
the set of reachable states
and the set of transitions remain the same.

\subsection{Informal Description of the Algorithm}

Appendix~\ref{AppendixBigAlgo} shows our new rate lifting algorithm.
The arguments of the algorithm are the SPA system $Sys$ and
its flat transition system $T$ (a set of transitions),
the set of transitions whose rate is to be modified $T_{mod} \subseteq T$
as well as a function $factor$ that returns, for each transition in $T_{mod}$,
its modification factor\footnote{Thus,
this presentation of the algorithm addresses
model repair rather than rate reengineering (cf.\ Sec.~\ref{sec:Intro}).}.
For transitions not in $T_{mod}$,
the modification factor is supposed to be 1.

In each iteration of the outer while-loop, the algorithm
picks one of the remaining transitions from $T_{mod}$, 
called $\hat{t}$ with action label called $c$, and processes it (possibly together with
other transitions that have the same action label).

\noindent {\bf (Part A) Local repair:}
If the involved set of the currently processed transition $\hat{t}$
consists of only one single process, the algorithm tries to 
adjust the rate of exactly one transition in that process.
This is done in lines \ref{line:localrepairbegin}--\ref{line:localrepairend} of the algorithm.
However, this will only work if all transitions where
this process makes the same move have the same, common modification factor.
That set of transitions is denoted as
$T_{c,\hat{s}_{i_1},\hat{s}_{i_1}'}$ in the algorithm,
and the common modification factor is denoted as $f_{com}$.

\noindent {\bf (Part B) System of equations for $T_c$:}
If the involved set of the currently processed transition $\hat{t}$
consists of two or more processes, all transitions with the same
involved set and the same action as $\hat{t}$ are processed together
(lines \ref{line:originalTcBegin}--\ref{line:originalTcEnd}).
In the algorithm, this set of transitions is denoted $T_c$.
For every transition $t \in T_c$, the algorithm determines
its participating set $PS(t)$, calculates the
relevant selfloop combinations (rslc) and from this information
creates a nonlinear equation
\begin{equation}
\label{eqn:bigeqn}
\sum_{C \in rslc(t)}
\prod_{P \in MS(t)} x^{(P)}_{s_P s_P'}
\prod_{
{
Q \in PS(t) \setminus MS(t) \atop \wedge \; \exists P \in MS(t): \; Q \in N_{must}(Sys,P,c)
}
}
\!\!\!\!\!\!\!\! x^{(Q)}_{s_Q s_Q} \;\;\;\;
\prod_{R \in C} x^{(R)}_{s_R s_R}
= \gamma \cdot f
\end{equation}
where the $x$'s are the unknown rates of the participating processes
(some of which are rates of selfloops, if such exist in the system).
The superscript of variable $x^{(\cdot)}_{\cdot\cdot}$ identifies the sequential process, and the subscript denotes the source/target pair of states.
The equation reflects the fact that the rates of all synchronising
processes are multiplied\footnote{Multiplication of rates is a de facto standard for Markovian SPAs, as implemented, for example, by the tools PRISM and STORM.
If the rate resulting from the synchronisation of two or more processes were
defined other than the product of the participating rates,
the equation would have to be changed accordingly,
but apart from this change, the lifting algorithm would still work in the same way.},
and that the total rate is obtained as the
sum over all possible relevant selfloop combinations.
Afterwards, this system of equations is solved, and if a solution exists,
all $c$-transitions in the current $c$-scope have been successfully
dealt with.
We would like to point out that,
if for some transition $t$ the participating set $PS(t)$ is
equal to its moving set $MS(t)$, then the resulting equation
has a much simpler form
\[
x^{(P_1)}_{s_1 s_1'} \cdot x^{(P_2)}_{s_2 s_2'}
\cdot \dots \cdot x^{(P_k)}_{s_k s_k'} = \gamma \cdot f
\]
(assuming that $|MS(t)|=k$),
since in this case, there are no selfloops involved, and therefore also
no combinations of selfloops to be considered.

\noindent {\bf (Part C) Expanding the context of $c$-transitions by synchronising with
more processes and inserting artificial selfloops within the current
$c$-scope:}
If the system of equations constructed in part (B)
for the set $T_c$ had no solution,
it is the strategy of the algorithm to involve more processes
(for the moment only from the current $c$-scope),
since this opens up more opportunity for controlling the context of these
$c$-transitions, and thereby controlling their rates.
In this part of the algorithm
(lines \ref{line:extendedTcBegin}--\ref{line:extendedTcEnd}),
$action(\hat{t}) = c$ is added to the synchronisation set at each node of
type $||_{\neg c}$ in the current $c$-scope,
except where this would lead to spurious transitions (of type A or type B).
These tasks of the algorithm are outsourced to function TRYSYNC (called in line \ref{line:trysync1}).
In TRYSYNC (the details of which are elaborated on in Appendix~\ref{AppendixTRYSYNC}), checking for spurious transitions of type A is done by checking all source
states of transitions in the current $T_c$, making sure that there are no
concurrently enabled $c$-transitions in newly synchronised subprocesses.
%(line~\ref{line:notConcEnabled} of function TRYSYNC).
%(it is enough to avoid spurious transitions emanating from reachable states,
%since spurious transitions from unreachable states do not hurt).
After adding action $c$ to some synchronisation sets,
we also have to make sure that all transitions originally in $T_c$ can still
occur, i.e.\ that they have not been disabled by the new synchronisations.
This is also done
%in line~\ref{line:addselfloops} of
in function TRYSYNC,
by inserting the necessary selfloops in those processes which
are now newly synchronising on action $c$, provided that those
new selfloops do not lead to the existence of spurious transitions
(of type B).
The steps just described guarantee that the modified system $Sys'$ has
exactly the same set of transitions as the original system $Sys$
(qualitatively),
but it remains to find the correct rates of all $c$-transitions
in the involved processes.
For this purpose, a similar (but larger) system of equations as in part (B)
is set up and solved.

\noindent {\bf (Part D) Expanding the Involved Set by moving
the current root upwards:}
It is possible that the systems of equations constructed in part (B)
and thereafter in part (C) both have no solution.
In this case, the algorithm seeks to expand the current $c$-scope by moving
its root up by one level
(unless the root of the overal system has already been reached).
Again, it needs to be ensured that no spurious transitions would
be created from this step.
This is done in lines \ref{line:moveupBegin}--\ref{line:moveupEnd}
of the algorithm, again with the help of function TRYSYNC.

%%%%%%%%%%%%%%%%%%%%%%%%%%%%%%%%%%%%%
\section{Experimental Result: Cyclic Server Polling System}
This section considers -- as a case study -- the Cyclic Server Polling System from the PRISM CTMC benchmarks, originally
described in \cite{IT90} as a GSPNs.
It is a system where a single server 
polls $N$ stations and provides service for them in cyclic order.
The SPA representation of this system is:
\begin{equation*}
Sys=Server \; ||_{\Sigma_s} \; (Station_1 \; || \; Station_2 \; ||\ldots|| \; Station_N)
\end{equation*}
where 
$\Sigma_s=\{loop_{ia},loop_{ib},serve_i \; | \; i=1 \dots N\}$.
%A more detailed description of the model is available in~\cite{IT90}. 
Assume that for each $loop_{1a}$-transition $t$ in the combined flat model, a modification factor $f(t) \neq 1$ is given.
Using our new lifting algorithm (Appendix~\ref{AppendixBigAlgo}), we lift this model 
repair information to the components.
The modification factors $f$ are chosen in such a way that local repair (Part A) and also
Part B of the algorithm will not find a solution.
In part C of the algorithm, it turns out that action $loop_{1a}$ can be added to all  
$||_{\neg loop_{1a}}$-nodes of the process tree, since it does not cause spurious transitions. 
Consequently, $loop_{1a}$-selfloops are added to all the states of the 
components $Station_2,\ldots,Station_{N}$,
leading to the modified SPA model
\begin{equation*}
Sys'=Server||_{\Sigma_s}(Station_1||_{loop_{1a}} Station'_2||_{loop_{1a}} \dots ||_{loop_{1a}}Station'_N)
\end{equation*}
where the stations with added selfloops are shown by $Station'_i$.
With the chosen modification factors, a solution can be found in Part C of the algorithm.
%Since in the SPA model $Sys$, there is only one  $loop_{1a}$-scope, which is the root of the process tree ($||_{\Sigma_s}$), expanding the context outside
%of the current $looop_{1a}$-scope is not an option (Part D of the algorithm is impossible).
%
This example is a scalable model where the state space increases with the number of stations $N$.
Note that the model contains symmetries, but the considered rate lifting problem is not symmetric, since only 
$Station_1$ and the $Server$ participate in the $loop_{1a}$-transitions.

\begin{table}[t]
\caption{Model statistics of the combined model for different numbers of stations}
\label{PollTable}
%{\scriptsize
\begin{tabular}{| l | c |c|c| c|c|c|}
  \hline 
  \backslashbox{ \hspace{1.25cm} }{N}          & 6    &7       &8      & 9     &10        &11    \\
  \hline
    Total number of states     				   &576   & 1344   &3072   &6912   &15360     &33792 \\		
  \hline
   Total number of transitions  			   &2208  & 5824   &14848  &36864  &89600     &214016\\                           
  \hline
  Number of $loop_{1a}$-transitions 		   &32    & 64     &128    &256    &512       &1024  \\                
  \hline 
\end{tabular}
%}
\end{table}
Table~\ref{PollTable} shows the model 
statistics for different numbers of stations.
The last row of the table (number of $loop_{1a}$-transitions) equals the number of equations, each of the $2^{N-1}$ equations containing the product of $N+1$ unknown variables.
The whole system of equations 
has $(N-1)*2+2$ variables stemming from $(N-1)*2$ newly added $loop_{1a}$-selfloops 
plus two original $loop_{1a}$-transitions (in the $Server$ and $Station_1$).
Figure~\ref{resPoll} shows the required times to run our rate lifting algorithm
(implemented as a
proof-of-concept prototype in Matlab \cite{Matlab}) and to solve the system of 
equations (done by Wolfram Mathematica~\cite{Mathematica})
 for different values of $N$ \footnote{Executed on a standard laptop with Intel Core i7-8650U 
CPU@ 1.90GHz-2.11GHz }.
For large $N$, the time for equation solving by far dominates the runtime of our algorithm
 (by a factor of 2.61 for $N=11$).
%For  different numbers of stations, the modification 
%factors are chosen so that the system of equations has a solution.
As shown in the figure, the runtimes grow exponentially, which is not surprising since 
the number of equations increases exponentially.
\begin{figure}[t!]
	\centering
	\includegraphics[width=0.7\textwidth]{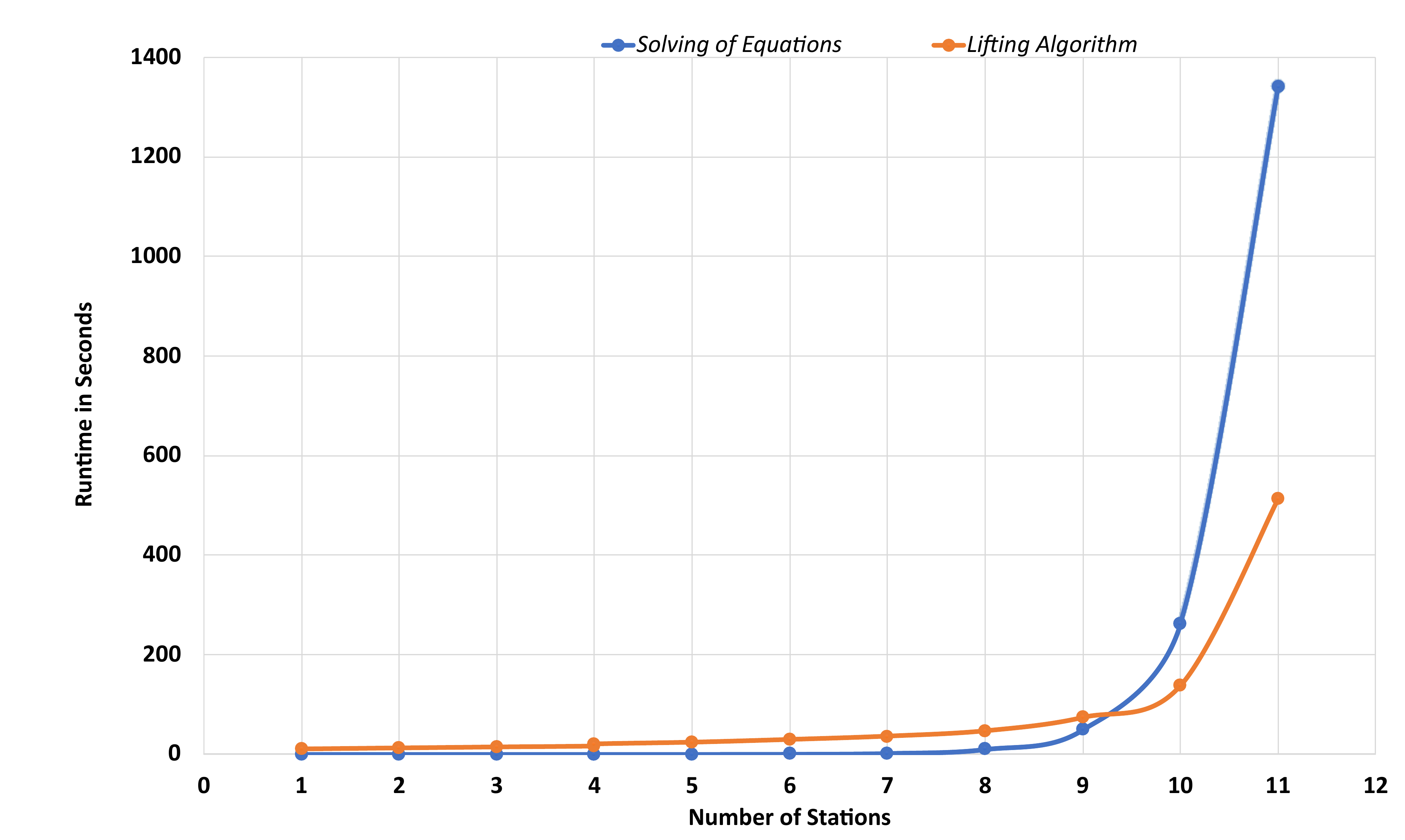}
	\caption{Runtime comparison for different number of stations}
  \label{resPoll}
\end{figure}

\section{Correctness and Optimality of the Algorithm}
\label{AppendixCorrectOptimal}

{\bf Correctness:}
It must be guaranteed that a solution found by the algorithm is correct, which means that
the modified system (with the calculated rates, possibly modified synchronisation sets and added selfloops)
possesses the same transition system, just with the transition rates modified as desired.
Once a solution has been found by the algorithm, it is easy to check its correctness
by simply constructing the flat transition system for the modified system and comparing it to the desired transition system.
However, one can also constructively argue for the correctness of the algorithm:
In Part A, if the condition in line \ref{line:commonfactorexists} is fulfilled, only the transition rate of a single transition
in one of the sequential processes is changed,
resulting in the change of a well-defined set of global flat transitions, all having the same modification factor,
which is the intended result.
Parts B, C and D each work by setting up and solving a nonlinear system of equations
relating to the original (Part B) resp.\ carefully modified SPA system (Parts C and D).
Each of these systems of equations precisely reflects the synchronisation
of sequential SPA processes within a certain scope,
taking into account all transitions with action label $c$ that take place in that scope, 
and making sure that the resulting rates of those transitions are all as desired
(thereby considering all relevant selfloop combinations).
All procedures in Parts B, C and D affect only a certain scope of the overall SPA system,
so it is enough to ensure correctness for such a local context.
If in Parts C and D the model is adjusted (by augmenting synchronisation sets and inserting artificial selfloops),
care is taken that this will not affect the structure of the low-level transition system.
Thus, since each individual step of the algorithm is correct, we can conclude by induction that the total effect of multiple steps is also correct.\\
{\bf Optimality:}
If the lifting algorithm doesn't find a solution, is it really guaranteed that there doesn't exist one?
We do not provide a formal proof of optimality, but we briefly give the basic line of argument:
If we start with part A of the algorithm and if local repair fails, this happens because the same local transition
(involving only a single sequential process) should be executed in different contexts with different modification factors (i.e.\ different rates), which is of course not possible
in the unmodified system.
In order to solve this problem, some ``controlling'' context needs to be added.
For this purpose, we synchronise the process with its neighbouring processes (where selfloops are added at specific states) in a subtree of a certain height, which leads to a set of equations (of the form of Eq.~\ref{eqn:bigeqn}) in part D of the algorithm.
We keep expanding the context until either a solution has been found or the root of the system has been reached,
which means that the algorithm uses its full
potential.
Alternatively, if we start with part B (because the involved set of the currently processed transition is already larger than one),  we first search for a solution in the ``local'' context, i.e.\ in the current involved set, which is a subtree of the system.
First we try to leave the model unchanged, which also leads to a system of equations.
If it turns out that this system of equations is inconsistent (thus not having a solution), we need to include more degrees of freedom into the equations.
This is first done within the current scope (by synchronising with as many processes as possible, albeit all from within this same scope) in part C.
If this also fails, i.e.\ if the thus extended system is also inconsistent, even more degrees of freedom can be added by
expanding the current scope, leading us again to part D of the algorithm.
In total, the algorithm uses all possible degrees of freedom, since at every step it involves all processes, except those whose involvement would cause damage (in the sense that spurious transitions would occur).
Therefore, since the algorithm uses all possible degrees of freedom, it is optimal.

\section{Conclusion}
In this paper, we have studied some novel structural concepts of Markovian SPA, which enabled us to formulate
an algorithm for the lifting of rate information from the flat low-level transition system of a general SPA model to its components.
The algorithm works for SPA specifications with an arbitrary structure and any number of components.
We have also presented a small case study that illustrates the practical use of the algorithm and remarked on the correctness and optimality of the algorithm.
As future work, we are planning to develop improved implementation strategies for the algorithm.
Another important point for future work is to characterise a priori the set of problem instances for which a solution to the rate lifting problem exists.

%%%%%%%%%%%%%%%%%%%%%%%%%%%%%%%%%%%%%%%%%%%%%%%%%%%%%%%%%%%%%%%%%%
  \bibliographystyle{plain}
  \bibliography{Literatur}
%%%%%%%%%%%%%%%%%%%%%%%%%%%%%%%%%%%%%%%%%%%%%%%%%%%%%%%%%%%%%%%%%%
\newpage
\begin{subappendices}
\renewcommand{\thesection}{\Alph{section}}

%%%%%%%%%%%%%%%%%%%%%%%
\newpage
\section{Rate Lifting Algorithm for an SPA System with $n$ Sequential Components}
\label{AppendixBigAlgo}

\begin{algorithmic}[1]
\small
\STATE {\bf Algorithm} RateLifting ($Sys,T,T_{mod},factor)$
\STATE // $T$ is the flat Markovian transition system of SPA system $Sys$,
\STATE // consisting of sequential processes $P_1, \dots, P_n$ as leaves of a process tree
\STATE // with synchronisation sets $A_i$
\STATE // The algorithm lifts the repair information given in the form
of
\STATE // rate modification factors $factor(t)$ for transitions $t \in T_{mod} \subseteq T$
\STATE // to the high-level components of $Sys$,
if possible.
\STATE // The repaired system is returned as $P_1', \dots, P_n'$
\STATE // and possibly modified synchronisation sets $A_i'$
\STATE
\STATE $P_1':=P_1, \dots, P_n':=P_n$, $\forall i: A_i' := A_i$ // initialisation
\STATE
\WHILE{$T_{mod} \neq \emptyset$}
\STATE choose $\hat{t} := ((\hat{s}_1,\dots,\hat{s}_n) \xlongrightarrow{c, \hat{\gamma} \cdot \hat{f}} (\hat{s}_1',\dots, \hat{s}_n'))$ from $T_{mod}$
\STATE // $\hat{t}$ is the transition processed during one iteration of the outer while-loop
\STATE $found := false$
\STATE // indicates that no solution found yet while processing the current $\hat{t}$
\STATE determine $IS(\hat{t}) := \{P_{i_1}, \dots, P_{i_m}  \}$ // the Involved Set $IS$ is also a $c$-scope
\label{line:mmm}
\STATE
\IF{$|IS(\hat{t})|=1$}
\label{line:localrepairbegin}
\STATE // {\bf Algorithm Part A:}
\STATE // try local repair in $P_{i_1}$ by considering all ``parallel transitions''
\STATE // (same approach as in algorithm for $n=2$ in lines 37-47 of \cite{Soltanieh:2020})
\STATE $T_{c,\hat{s}_{i_1},\hat{s}_{i_1}'} := \{t \in T \; | \; action(t) = c \wedge source_{i_1}(t) = \hat{s}_{i_1} \wedge target_{i_1}(t) = \hat{s}_{i_1}' \}$
\IF{$\exists f_{com} \in \Reals: \forall t \in T_{c,\hat{s}_{i_1},\hat{s}_{i_1}'}: factor(t) = f_{com}$} \label{line:commonfactor}
\label{line:commonfactorexists}
\STATE // there exists a common factor $f_{com}$ for all transitions in $T_{c,\hat{s}_{i_1},\hat{s}_{i_1}'}$
\STATE in $P_{i_1}'$ set $\hat{s}_{i_1} \xlongrightarrow{c, \gamma_{i_1} \cdot f_{com}} \hat{s}_{i_1}'$ (where $\gamma_{i_1}$ is the current rate in $P_{i_1}'$)
\STATE $T_{mod} := T_{mod} \setminus T_{c,\hat{s}_{i_1},\hat{s}_{i_1}'}$
\FOR{each $t \in T_{c,\hat{s}_{i_1},\hat{s}_{i_1}'}$}
  \STATE $factor(t) := 1$
  %\STATE {\color{red} Update $T_{mod}$}
  \STATE // the modification factor of the fixed transitions is changed to 1,
  \STATE // which is important in case they are considered again
  \STATE // when dealing with another $c$-transition from $T_{mod}$ later
  \ENDFOR
\STATE $found=true$
\ENDIF
\label{line:localrepairend}
\IF{$found=false$}
\STATE // local repair was not successful
\STATE // therefore, since $|IS(\hat{t})|=1$, the algorithm has to move up in the tree
\STATE $curr\_root := P_{i_1}$
\STATE // $curr\_root$ denotes the root of the subtree that is currently
\STATE // considered as the context for transition $\hat{t}$
\ENDIF
%\STATE
\newpage
\ELSE
\STATE // {\bf Algorithm Part B:}
\STATE // it holds that $|IS(\hat{t})| > 1$
\STATE $T_c := \{t \in T \; | \; action(t) = c \; \wedge \; IS(t)= IS(\hat{t}) \}$
\label{line:originalTcBegin}
\STATE // all $c$-transitions in current $c$-scope are considered together
\STATE $r :=$ root of current $c$-scope // needed in lines~\ref{line:createeq1}
and \ref{line:rslc2}
\FOR{each $t := ((s_1, \dots, s_n) \xlongrightarrow{c, \gamma \cdot f} (s_1', \dots, s_n')) \in T_c$}
\label{line:aaa}
\STATE determine $PS(t) := \{ P_{p_1}, \dots, P_{p_k} \} \subseteq IS(\hat{t})$
\label{line:ccc}
\label{line:createeq0}
\STATE find the set of relevant selfloop combinations rslc$(t) :=$ RSLC$(t,r)$
\label{line:createeq1}
\STATE // for rslc see Appendix~\ref{AppendixRSLC} and Sec.~\ref{subsec:rslc}
\STATE create an equation
\label{line:createeq2}
%\STATE $x^{(p_1)}_{s_{p_1}s_{p_1}'} \cdot x^{(p_2)}_{s_{p_2}s_{p_2}'} \cdot \ldots \cdot x^{(p_k)}_{s_{p_k}s_{p_k}'} = \gamma \cdot f$
\STATE ${\displaystyle \sum_{C \in rslc(t)}
\prod_{P \in MS(t)} x^{(P)}_{s_P s_P'} \!\!\!\!
\prod_{
{
Q \in PS(t) \setminus MS(t) \atop \wedge \; \exists P \in MS(t): \; Q \in N_{must}(Sys,P,c)
}
}\!\!\!\!\!\!\! x^{(Q)}_{s_Q s_Q} \;\;\;\;
\prod_{R \in C} x^{(R)}_{s_R s_R}
= \gamma \cdot f}$
\label{line:createeq3}
\ENDFOR
\STATE solve system of equations, if successful set $found := true$
\IF{$found = true$}
\STATE $T_{mod} := T_{mod} \setminus T_c$
\FOR{each $T \in T_c$}
\STATE $factor(t) := 1$
\STATE // needed in case same transition is considered again later
%\STATE {\color{red} Update $T_{mod}$}
\ENDFOR
\label{line:originalTcEnd}
\STATE
\ELSE
%\newpage
\STATE // {\bf Algorithm Part C:}
\STATE // it still holds that $found = false$
\label{line:ddd}
% \STATE // try to involve processes from the ``pseudo'' involved set $PIS(\hat{t}) := IS(\hat{t}) \setminus IS_r(\hat{t})$
% \STATE determine $ PIS(\hat{t}) := IS(\hat{t}) \setminus IS_r(\hat{t}) = \{ P_{q_1}, \dots, P_{q_l}  \}$
%\IF{$PIS(\hat{t}) \neq \emptyset$}
% \STATE // first save the current model, in order to be able to revert to it later if necessary
% \STATE save the current model
% \label{line:savemodel}
\STATE // now try to change the model:
\label{line:extendedTcBegin}
\IF {$PS(\hat{t}) \neq IS(\hat{t})$}
\STATE // try to add $c$ to the sync.\ set of every internal node $X$
\STATE // of the form $X = X_1 ||_{\neg c} X_2$, within the current $c$-scope,
\STATE // but this must not lead to sp.\ tr.\ outgoing from a reachable state.
\STATE // If $c$ can be added, then the necessary $c$-selfloops must also be added.
\FOR {all nodes $X=X_1 ||_{\neg c}X_2$ in current $c$-scope (in bottom-up order)}
\label{line:BigForBegin}
\STATE TRYSYNC $(X,c)$
\label{line:trysync1}
\STATE // TRYSYNC
%(cf.\ App.~\ref{AppendixTRYSYNC})
tries to add $c$ to the sync.\ set of node $X$,
\STATE // if possible changes $X$ to $||_c$ and adds necessary selfloops below $X$
\ENDFOR
\label{line:BigForEnd}
\FOR{each $t := ((s_1, \dots, s_n) \xlongrightarrow{c, \gamma \cdot f} (s_1', \dots, s_n')) \in T_c$}
\STATE determine $PS(t) := \{ P_{p_1}, \dots, P_{p_k} \} \subseteq IS(\hat{t})$
\STATE find the set of relevant selfloop combinations rslc$(t) := RSLC(t,r)$
\label{line:rslc2}
\STATE // similar calculations as in the two previous lines
\STATE // have already been performed in lines~\ref{line:createeq0} and \ref{line:createeq1}, but now the
\STATE // model is changed, so $PS(t)$ will potentially be larger than before,
\STATE // and the sets in rslc$(t)$ will potentially be larger than before!
%\STATE determine $NSL(t)$
\STATE create an equation
\STATE ${\displaystyle \sum_{C \in rslc(t)}
\prod_{P \in MS(t)} x^{(P)}_{s_P s_P'}\!\!\!\!
\prod_{
{
Q \in PS(t) \setminus MS(t) \atop \wedge \; \exists P \in MS(t): \; Q \in N_{must}(Sys,P,c)}
}
\!\!\!\!\!\!\!\! x^{(Q)}_{s_Q s_Q} \;\;\;\;
\prod_{R \in C} x^{(R)}_{s_R s_R}
= \gamma \cdot f}$
\label{line:createeq4}
\label{line:NSL}
\ENDFOR
\STATE solve system of equations, if successful set $found := true$
\IF{$found = true$}
\STATE // changing the model was successful,
%\STATE // so we will {\bf not} go back to the previous
%model saved in line~\ref{line:savemodel}
%\STATE // we just have to record the fixed transitions:
\STATE $T_{mod} := T_{mod} \setminus T_c$
\FOR{each $T \in T_c$}
\STATE $factor(t) := 1$
\STATE // needed in case same transition is considered again later
% \STATE {\color{red} Update $T_{mod}$}
\ENDFOR
\label{line:extendedTcEnd}
\ELSE
\STATE // no solution found in current $c$-scope, even with added selfloops
\STATE // it still holds that $found = false$
%\STATE revert to the previous model saved in line~\ref{line:savemodel}
\STATE $curr\_root :=$ root of current $c$-scope
\ENDIF
\ENDIF
\ENDIF
\label{line:bbb}
\ENDIF
\STATE
%\newpage
\STATE // {\bf Algorithm Part D:}
\WHILE{$found = false$ and $curr\_root \neq root(Sys)$}
\label{line:moveupBegin}
\STATE // move current root upwards and try to expand $c$-scope
\STATE $curr\_root := parent(curr\_root)$
\STATE $success :=$ TRYSYNC($curr\_root$, $c$)
\label{line:trysync2}
\STATE // TRYSYNC
%(cf.\ App.~\ref{AppendixTRYSYNC})
tries to add $c$ to the sync.\ set of node $curr\_root,$
\STATE // and if possible changes $curr\_root$ to $||_c$ and adds the
\STATE // necessary selfloops below $curr\_root$
\IF{$success$}
\STATE $IS_{new}(\hat{t}) := $ all leaves of subtree rooted at $curr\_root = \{ P_{i_1}, \ldots, P_{i_m} \}$
\STATE // this $m$ is now larger than in line~\ref{line:mmm},
\STATE // it gets larger in each iteration of this while-loop
\STATE $IS(\hat{t}) := IS_{new}(\hat{t})$
\STATE $T_c := \{t \in T \; | \; action(t) = c \; \wedge \; PS(t) \cap IS(\hat{t}) \neq \emptyset \}$
\STATE $r := curr\_root$ // root of current $c$-scope
\STATE now the same code as in lines \ref{line:aaa}--\ref{line:bbb}
%(without lines \ref{line:ccc} - \ref{line:ddd})
\STATE \dots
\ELSE
\STATE BREAK while-loop of lines~\ref{line:moveupBegin}--\ref{line:moveupEnd} and move $curr\_root$ up further
\STATE // it was not possible to add $c$ to the sync.\ set at $curr\_root$
\ENDIF
%\ENDIF
\ENDWHILE
\label{line:moveupEnd}
\ENDWHILE
\end{algorithmic}
%%%%%%%%%%%%%%%%%%%%%%%%%%%%%%%%%%%%%%%%%%%%%%%%

\newpage
\section{RSLC Algorithm (Relevant Selfloop Combinations) }
\label{AppendixRSLC}

Note: RSLC is called from the main rate lifting algorithm in lines
\ref{line:createeq1} and
\ref{line:rslc2}.\\

\begin{algorithmic}[1]
\STATE {\bf Algorithm} RSLC ($t,n$)
\STATE // $t \in T$ is a transition
\STATE // $n$ is a node of the process tree of $Sys$
\STATE // The algorithm returns a set of sets of sequential processes
\STATE // (each representing a relevant selfloop combination contributing to $t$)
\IF{type($n$) = leaf}
\IF{$P_n \in (PS(t) \cap SS(t)) \setminus
\bigcup_{P \in MS(t)} N_{must}(Sys,P,c)$}
\STATE // $P_n$ denotes the process represented by leaf-node $n$
\STATE // $P_n \in PS(t)$ ensures that $P_n$ has a selfloop at its current state
\STATE return $\{ \{ P_n \} \}$
\STATE // a set containing a singleton set is returned
\ELSE
\STATE return $\{ \emptyset \}$
\STATE // the set containing the empty set is returned
\ENDIF
\ELSIF{type($n$) = $||_c$}
\STATE return $\{ C_1 \cup C_2 \; | \; C_1 \in RSLC(t,\mbox{lchild}(n))
\wedge C_2 \in RSLC(t,\mbox{rchild}(n)) \}$
\STATE // all combinations of left and right subtree
\ELSE
\STATE // it holds that type($n$) = $||_{\neg c}$
\STATE return $\{ C \; | \; C \in RSLC(t,\mbox{lchild}(n))
\vee C \in RSLC(t,\mbox{rchild}(n)) \}$
\STATE // the (disjoint) union of left and right subtree
\ENDIF
\end{algorithmic}

%%%%%%%%%%%%%%%%%%%%%%%%%
\newpage
%%%%%%%%%%%%%%%%%%%%%%%%%
%\renewcommand{\thesection}{\Alph{section}}%

\section{TRYSYNC Function }
\label{AppendixTRYSYNC}

Note: 
TRYSYNC is called from the main algorithm in lines
\ref{line:trysync1} and \ref{line:trysync2}.\\

\begin{algorithmic}[1]
\STATE {\bf Function: }TRYSYNC ($X,c$) {\bf returns} Boolean
\STATE // This function checks whether adding action $c$ to node $X$ of type $||_{\neg c}$
\STATE // is possible without creating spurious transitions.
\STATE // If possible, it changes $X$ to $||_c$ and adds the necessary selfloops.
\STATE // The return value is {\bf true} iff action $c$ could be successfully added.
\IF{changing $X$ to $||_c$ would cause sp. tr. of type A}
\label{line:notConcEnabled}
\STATE // checking for sp.\ tr.\ of type A can be done
\STATE //  by ensuring that for all reachable states $\overrightarrow{s}=(s_1,...,s_n)$,
\STATE // transitions $\overrightarrow{s}_{X_1}\xlongrightarrow{c} \dots$ and $\overrightarrow{s}_{X_2}\xlongrightarrow{c} \dots$ do not both exist, 
\STATE // where $X_1$ and $X_2$ are the children of node $X$.\footnote{$\overrightarrow{s}_{X_i}$ is a vector representing the states of the sequential components in process $X_i$, i.e.\ it is the projection of $\overrightarrow{s}$ to  $X_i$.}
\RETURN {\bf false} // do nothing, since $X$ can't be made $c$-synchronising
\ENDIF
\STATE compute $C:=COMB (X,c)$
\label{line:comb1}
\STATE // combinations of $c$-participants below $X$, 
assuming $X$ was of type $||_c$
\STATE $T^X_c=\{t \in T_c \; | \; PS(t) \cap P_X\not =\emptyset\}$
\STATE // the subset of $c$-transitions with some process from $X$ participating
\STATE // compute the feasible combinations  $C^{feas} \subseteq C$:
\STATE $C^{feas}:=\{C_i \in C \; | \; \forall t \in T^X_c:$
\label{line:feasbegin}
\STATE $\hspace{1.4cm}(\text{selfloops in all }P_k \in C_i \setminus PS(t) \text{ at state } source_k(t) \text{ are present}$
\STATE $\hspace{1.4cm} \text{or can be added without causing sp.\ tr.\ of type B)}\}$
%, see Sect.~\ref{Type B}
\label{line:feasend}
\IF {$C^{feas}=\emptyset$}
\RETURN {\bf false} // do nothing, since $X$ can't be made $c$-synchronising
\ENDIF
\STATE // now we know that $X$ can be made $c$-synchronising!
\STATE change $X$  to type $||_c$
\STATE // now permanently add the new selfloops:
\FOR {each $C_i \in C^{feas}$}
\FOR {each $t \in T^X_c$}
\STATE // it suffices to look at projection $t^X:\overrightarrow{s_X} \xlongrightarrow{c} \overrightarrow{s_X}'$
\FOR {each $P_k \in C_i \wedge P_k \not \in PS(t)$}
\IF {selfloop
in $P_k: s_k \xlongrightarrow{c} s_k$
does not exist}
\STATE add selfloop in $P_k: s_k \xlongrightarrow{c,x^{P_k}_{s_k,s_k}} s_k$
\label{line:addselfloops}
\ENDIF
\ENDFOR
\ENDFOR
\ENDFOR
\RETURN {\bf true} // $X$ has been changed to $||_c$ and the selfloops added
\STATE {\bf end function}
\end{algorithmic}

\newpage
\noindent
{\bf Checking for Spurious Transitions of Type B:}
We now discuss the details of how it can be checked whether a combination $C_i \in C = COMB(X,c)$ 
is feasible or not, i.e.\ the details of lines \ref{line:feasbegin} - \ref{line:feasend}
of function TRYSYNC.
We are in the process of checking whether making node $X=X_1||_{\neg c}X_2$ $c$-synchronising
(and adding some selfloops as needed by combination $C_i$)
causes spurious transition of type B or not.
The following program segment performs this check:
\begin{algorithmic}[1]
\STATE assume (temporarily) that $X = X_1 ||_c X_2$
\STATE // i.e.\ assume that $X$ was $c$-synchronising
\STATE // compute the set of newly needed selfloops for combination $C_i$:
\STATE $Selfloops(C_i) := \{
(P_k,s_{k_j}) \; | \;
\exists t \in T^X_c:
P_k \in C_i \setminus PS(t) \wedge s_{k_j}=source_{P_k}(t)$
\STATE $ \hspace{25mm} \wedge
\text{ selfloop } s_{k_j} \xlongrightarrow{c} s_{k_j} \text{ does not yet exist}
\} $
\FOR{all $P_k$ where any new selfloops are needed for $C_i$}
\label{line:outerforbegin}
\STATE // assume that $P_k$ is part of $X_1$, otherwise symmetric procedure
\STATE let $Y:=Y_1||_{c}Y_2$ be the lowest $c$-synchronising node containing $P_k$
\STATE // assume that $P_k$ is part of $Y_1$, otherwise symmetric procedure
\STATE // $Y$ is either part of $X_1$, or $Y=X$, or $Y$ is even above $X$ (when moving upwards)
\FOR{all states $s_{k_j}$ where a selfloop is needed and not yet present in $P_k$}
\STATE temporarily add selfloop $s_{k_j} \xlongrightarrow{c} s_{k_j}$
\STATE set $Z := P_k$ and $\overrightarrow{z} := s_{k_j}$
\STATE // we use $Z$ to denote a subsystem with a selfloop and $\overrightarrow{z}$ its state, \STATE // in order to be flexible when moving $Y$ upwards (see line~\ref{line:rerun})
\IF {$Y \neq X$ and $\exists \overrightarrow{s}$ reachable state such that $\overrightarrow{s}_{Z}=\overrightarrow{z}$ and $\exists \overrightarrow{s}_{Y_2}\xlongrightarrow{c} \ldots$}
\label{line:ifclausebegin}
\STATE // non-selfloop $c$-transition in $\overrightarrow{s}_{Y_2}$ would synchronise with
\STATE // new (atomic or combined) selfloop at $\overrightarrow{z}$
\STATE remove all selfloops newly added while processing $X$
\STATE BREAK  // a sp.\ tr.\ exists, therefore $C_i$ is not feasible
\ELSIF {$\exists \overrightarrow{s}$ reachable state such that $\overrightarrow{s}_{Z}=\overrightarrow{z}$ and $\exists \overrightarrow{s}_{Y_2}\xlongrightarrow{c} \overrightarrow{s}_{Y_2}$}
\STATE // selfloop $c$-transition in $\overrightarrow{s}_{Y_2}$ would sync
\STATE // with new (atomic or combined)  selfloop at $\overrightarrow{z}$
\STATE // yielding a new combined selfloop 
$\overrightarrow{s}_{Y}\xlongrightarrow{c} \overrightarrow{s}_{Y}$
\STATE move upwards, i.e.\ rerun the if-clause
(lines~\ref{line:ifclausebegin} - \ref{line:ifclauseend}) for $Z := Y$, $\overrightarrow{z} := \overrightarrow{s}_Y$ and $Y:=$ next higher node $Y$ of type $||_c$, if such a higher node $Y$ exists
\label{line:rerun}
\ELSE
\STATE // no $c$-transition nor $c$-selfloop in $\overrightarrow{s}_{Y_2}$
\STATE selfloop can be added to $s_{k_j}$ without causing sp.\ tr.\
\ENDIF
\label{line:ifclauseend}
\ENDFOR
\ENDFOR
\label{line:outerforend}
\STATE // once both FOR-loops have terminated without BREAK,  we know
\STATE // that $C_i$ is feasible
\end{algorithmic}

%%%%%%%%%%%%%%%%%%
\newpage
%%%%%%%%%%%%%%%%%%
\section{COMB Algorithm }
\label{AppendixCOMB}

Note: COMB is called from TRYSYNC in line
\ref{line:comb1}.\\

\begin{algorithmic}[1]
\STATE {\bf Algorithm} COMB ($X,c$)
\STATE // This algorithm returns all sequential component combinations under
\STATE // node $X$ wrt action $c$. Every combination consists of some sequential
\STATE //  components that may participate in a $c$-transition.
\STATE // Like RSLC, this algorithm returns a set of sets of sequential processes.
\IF{type(X)=leaf}
\RETURN $\{\{P_X\}\}$
\STATE // the seq. process
\ELSIF {type($X$)=$||_c$}
\RETURN 
\STATE $\{C_1 \cup C_2 \; | \; C_1 \in \text{COMB}(\text{lchild}(X),c) \wedge C_2 \in \text{COMB}(\text{rchild}(X),c)\}$
\ELSE
\STATE //type(X)=$||_{\neg c}$
\RETURN 
\STATE $\{C \; | \; C \in \text{COMB}(\text{lchild}(X),c) \vee C \in \text{COMB}(\text{rchild}(X),c)\}$
\ENDIF
\end{algorithmic}

\newpage
\section{Calculating the Participating Set}
\label{AppendixPS}
In Sec.~\ref{subsec:movandpar} quite a complicated closed-form expression for the Participating Set of a transition was given.
%{\color{red} Move this ``In practice...'' to the appendix?}
In practice, the participating set $PS(t)$ for a given transition $t$ in an SPA system can be obtained by the following procedure:
We first calculate a set of candidate processes $PS_{cand}$:
\begin{equation*}
\begin{array}{ll}
PS_{cand}(t)=& \{P_i \in SS(t)|  \\
 &(\exists P_j \in MS(t) : (P_i \in N_{may}(Sys,P_j,a) \wedge (\text{selfloop }  s_i \xrightarrow{a, \lambda_i} s_i \text{ exists} )))\\
	%\space & (P_i \in N_{may}(Sys,P_j,a) \wedge (\text{selfloop}:  s_i \xrightarrow{a, \lambda_i} s_i \text{exists} ))\\
\space & \wedge( \not\exists P_j \in MS(t): P_i \in N_{cannot}(Sys,P_j,a))\}\\
\end{array} 
\end{equation*}
Then we determine the set $PS_{may} \subseteq PS_{cand}$ as follows:
For each $P_k \in PS_{cand}(t)$, let
$r$ of type $\parallel_a$ be the root of the smallest subtree containing $P_k$ 
and at least one of the components of $MS(t)$).
For each node $N$ of type ${\parallel_a}$ on the path from $P_k$ to $r$,
excluding $r$, let $X_{N}$ be the child process of node $N$ which does not contain $P_k$.
If a selfloop $(\overrightarrow{s}_{X_N} \xrightarrow{a} \overrightarrow{s}_{X_N})$ exists in process $X_N$ 
%$(s_{P_{X_1}},\ldots,s_{P_k},\ldots s_{P_{X_g}}) \xrightarrow{a} (s'_{P_{X_1}},\ldots,s'_{P_k},\ldots s'_{P_{X_g}})$ 
then $P_k\in PS_{may}(t)$.
%where $PS_{may} \subset PS(t)$. 
Overall we get:
\begin{equation*}
PS(t)=MS(t) \; \cup \; PS_{may}(t) \; \cup \;
\{P_i \in SS(t) \;|\; \exists P_j \in MS(t) : P_i \in N_{must}(Sys, P_j, a)\}
\end{equation*}

Returning to the example from Sec.~\ref{subsec:movandpar} (Figure~\ref{PSexample}):
For transition $t=((1,1,3,1,2) \xrightarrow{a} (2,1,3,1,2))$,
we assumed that there are selfloops in state 1 in  $P_2$ and also in state 3 in $P_3$,
so both $P_2$ and $P_3$ are in $PS_{cand}(t)$.
Since on the path from $P_2$ to $r$ ($=X_1$) there is no node of type $||_a$,
nothing else needs to be checked for $P_2$, so $P_2$ is in $PS_{may}$.
For $P_3$, we need to check whether the other child of $X_3$
(which is $X_4$) can peform
an $a$-selfloop at the current state $\overrightarrow{s}_{X_4}=(1,2)$.
But since this is not the case, $P_3$ is not in $PS_{may}$
and therefore not in $PS(t)$.

%%%%%%%%%%%%%%%%%%%%%%%
\end{subappendices}
\end{document}